\newcommand{\be}{\begin{equation}}
	\newcommand{\ee}{\end{equation}}
\newcommand{\bee}{\begin{eqnarray}}
	\newcommand{\eee}{\end{eqnarray}}
\newcommand{\bse}{\begin{subequations}}
	\newcommand{\ese}{\end{subequations}}
\newcommand{\nnb}{\nonumber}
\newcommand{\cmark}{\ding{51}}
\newcommand{\xmark}{\ding{55}}
\newcommand{\specialcell}[2][c]{%
	\begin{tabular}[#1]{@{}c@{}}#2\end{tabular}}
\begin{document}
\title{Network Utility Maximization under Maximum Delay \\ Constraints and Throughput Requirements}
\author{Qingyu Liu, Haibo Zeng}
\affiliation{
	\department{Electrical and Computer Engineering}
	\institution{Virginia Tech}
	}
\author{Minghua Chen}
\affiliation{
	\department{Information Engineering}
	\institution{The Chinese University of Hong Kong}
	}	

\begin{CCSXML}
	<ccs2012>
	<concept>
	<concept_id>10002950.10003624.10003633.10003644</concept_id>
	<concept_desc>Mathematics of computing~Network flows</concept_desc>
	<concept_significance>500</concept_significance>
	</concept>
	<concept>
	<concept_id>10003033.10003068.10003073.10003074</concept_id>
	<concept_desc>Networks~Network resources allocation</concept_desc>
	<concept_significance>500</concept_significance>
	</concept>
	</ccs2012>
\end{CCSXML}

\ccsdesc[500]{Mathematics of computing~Network flows}
\ccsdesc[500]{Networks~Network resources allocation}

\keywords{Network utility maximization, multiple-unicast network flow, delay-aware network optimization}

\begin{abstract}

We consider the problem of maximizing aggregate user utilities over a multi-hop network, subject to link capacity constraints, maximum end-to-end delay constraints, and user throughput requirements. A user's utility is a concave function of the achieved throughput or the experienced maximum delay. The problem is important for supporting real-time multimedia traffic, and is uniquely challenging due to the need of simultaneously considering maximum delay constraints and throughput requirements. We first show that it is NP-complete either (i) to construct a feasible solution strictly meeting all constraints, or (ii) to obtain an optimal solution after we relax maximum delay constraints or throughput requirements up to constant ratios. We then develop a polynomial-time approximation algorithm named \textsf{PASS}. The design of \textsf{PASS} leverages a novel understanding between non-convex maximum-delay-aware problems and their convex average-delay-aware counterparts, which can be of independent interest and suggest a new avenue for solving maximum-delay-aware network optimization problems. Under realistic conditions, \textsf{PASS} achieves constant or problem-dependent approximation ratios, at the cost of violating maximum delay constraints or throughput requirements by up to constant or problem-dependent ratios. \textsf{PASS} is practically useful since the conditions for \textsf{PASS} are satisfied in many popular application scenarios. We empirically evaluate \textsf{PASS} using extensive simulations of supporting video-conferencing traffic across Amazon EC2 datacenters. Compared to existing algorithms and a conceivable baseline, \textsf{PASS} obtains up to $100\%$ improvement of utilities, by meeting the throughput requirements but relaxing the maximum delay constraints that are acceptable for practical video conferencing applications.

\end{abstract}


\copyrightyear{2018} 
\acmYear{2018} 
\setcopyright{acmcopyright}
\acmConference[Submission to MobiHoc '19]{The Twentieth International Symposium on Mobile Ad Hoc Networking and Computing}{July 2--5, 2019}{Catania, Italy}
\acmBooktitle{Submission to MobiHoc '19, July 2--5, 2019, Catania, Italy}
\acmPrice{15.00}
\acmDOI{x}
\acmISBN{x}

\maketitle

\section{Introduction}
\begin{table*}[]
	\centering
	\caption{Compare our work with existing studies.}
	\label{tab:relatedwork}
	\scalebox{0.9}{\begin{tabular}{|c||c|c|c|c|c|}
			\hline
			 \multirow{2}{*}{} & \multicolumn{2}{c|}{Maximization Objective} &  \multicolumn{2}{c|}{Constraints}  &  Networking Setting \\ \cline{2-6}
			  & \specialcell{Aggregate Throughput-\\Based Utilities} & \specialcell{Aggregate Maximum-\\Delay-Based Utilities} & \specialcell{Throughput\\Requirements} & \specialcell{Maximum Delay\\Constraints} & Multiple-Unicast \\ \hline\hline
			 Many, e.g.,~\cite{kelly1998rate,low1999optimization,wang2003can,palomar2006tutorial} & \cmark & \xmark & \cmark & \xmark &  \cmark \\ \hline
			 \cite{misra2009polynomial,zhang2010reliable,correa2004computational,correa2007fast,my} & \xmark & \cmark$^*$ & \cmark & \xmark  & \xmark \\ \hline
			 \cite{cao2017optimizing,yu2018application} & \cmark$^{**}$ & \xmark & \xmark & \cmark & \cmark \\ \hline
			 Out Work & \cmark & \cmark & \cmark & \cmark & \cmark \\ \hline
	\end{tabular}}
\linebreak
\footnotesize \emph{Note}. $^*$: The objective of~\cite{misra2009polynomial,zhang2010reliable,correa2004computational,correa2007fast,my} is to minimize maximum delay, which is a special case of maximizing maximum-delay-based utility functions.

$^{**}$: The objective of~\cite{cao2017optimizing,yu2018application} is to maximize throughput, which is a special case of maximizing throughput-based utility functions.
\end{table*}

We consider a multiple-unicast communication scenario where each unicast source streams a network flow to its destination over a multi-hop network, possibly using multiple paths. We study the problem of maximizing aggregate user utilities, subject to link capacity constraints, maximum delay constraints, and user throughput requirements. A user's utility is a concave function of the achieved throughput or the experienced maximum delay. The \textit{maximum delay} denotes the maximum Source-to-Destination (\textsf{S2D}) delay, or equivalently the delay of the slowest \textsf{S2D} path that carries traffic. 

Our study is motivated by the increasingly interests on supporting delay-critical traffic in various applications, e.g., video conferencing~\cite{chen2013celerity,liu2016delay,hajiesmaili2017cost}. It is reported that 51 million users per month attend WebEx meetings, and 3 billion minutes of calls per day use Skype~\cite{my}. Low \textsf{S2D} delay is vital for such video conferencing applications. As recommended by the International Telecommunication Union (ITU)~\cite{ITU}, a delay less than 150ms can provide a transparent interactivity while delays above 400ms are unacceptable for video conferencing. We remark that the maximum \textsf{S2D} delay, instead of the average one, is a critical concern for provisioning low delay services, since there may exist traffic which experiences an arbitrarily large \textsf{S2D} delay even for the solution that minimizes average \textsf{S2D} delay performance~\cite{my}. In sharp contrast, all the traffic can be streamed from its source to its destination timely following any solution that has an acceptable maximum \textsf{S2D} delay performance, because the maximum \textsf{S2D} delay is defined as an upper bound of \textsf{S2D} delays of all the traffic.

We consider a delay model where transmission over a link experiences a constant delay if the aggregate flow rate of the link is within a constant capacity, and unbounded delay otherwise. This model fits a number of practical applications, particularly the routing of delay-critical video conferencing traffic over inter-datacenter networks. 
%
Specifically, according to recent reports from Microsoft~\cite{hong2013achieving} and Google~\cite{jain2013b4}, most real-world inter-datacenter networks are characterized by sharing link bandwidth for different applications, with over-provisioned link capacities. (i) Real-world inter-datacenter networks nowadays are utilized to simultaneously support traffic from various services, some of which have stringent delay requirements (e.g., video conferencing) while others are bandwidth-hungry and less sensitive to delay (e.g., data maintenance). Link capacity is often reserved separately for different types of services depending on their characteristics. (ii) Cloud providers typically over-provision inter-datacenter link capacity by $2-3$ times on a dedicated backbone to guarantee reliability, and the average link-capacity utilizations (the aggregate utilization of applications, not the bandwidth-utilization of individual applications) for busy links are $30-60\%$~\cite{liu2016delay}. As such, for applications whose traffic volume is within the reserved capacity for their types of service, queuing delays are negligible and the constant propagation delays dominate end-to-end delays, as evaluated by~\cite{liu2016delay} in a realistic network of Amazon EC2. Otherwise, if the traffic volume exceeds the reserved capacity, the applications will start to experience substantial queuing delays and thus substantial end-to-end delays. \textit{These observations justify our link capacity and delay model}, especially for the critical problem of routing video-conferencing traffic over real-world inter-datacenter networks.

\subsection{Existing Studies}
We summarize existing studies in Tab.~\ref{tab:relatedwork}. In the literature, there exist many network utility maximization studies with throughput concerns, e.g.,~\cite{kelly1998rate,low1999optimization,wang2003can,palomar2006tutorial}, but less of them consider maximum delays. This is because the maximum delay of a single-unicast network flow is non-convex with the flow decision variables, and hence even a maximum-delay-aware problem in a simple networking scenario, e.g., the single-unicast maximum delay minimization problem, is NP-hard and thus challenging to solve~\cite{misra2009polynomial}. 

Misra \emph{et al.}~\cite{misra2009polynomial} study the single-unicast maximum delay minimization problem subject to a throughput requirement, and design a Fully-Polynomial-Time Approximation Scheme (\textsf{FPTAS}). Zhang \emph{et al.}~\cite{zhang2010reliable} generalize the \textsf{FPTAS} of~\cite{misra2009polynomial} and develop an \textsf{FPTAS} to minimize maximum delay subject to throughput, reliability, and differential delay constraints also in the single-unicast scenario. We observe that both \textsf{FPTAS}es require to solve flow problems iteratively in time-expanded networks, by employing a binary-search based idea applicable only in the single-unicast setting. It is thus unclear how to extend their techniques to the general multiple-unicast scenario where the utility of an unicast (user) can be a concave function with the experienced maximum delay.

Cao \emph{et al.}~\cite{cao2017optimizing} develop an \textsf{FPTAS} that can maximize throughputs subject to maximum delay constraints in a multiple-unicast setting. This \textsf{FPTAS} is generalized by Yu \emph{et al.}~\cite{yu2018application} to design \textsf{FPTAS}es for other throughput maximization problems for practical \textsf{IoT} applications. Similar to \textsf{FPTAS}es proposed by~\cite{misra2009polynomial,zhang2010reliable}, to satisfy maximum delay constraints while optimizing throughputs, \textsf{FPTAS}es of~\cite{cao2017optimizing,yu2018application} require to solve flow problems iteratively in time-expanded networks, which is time-consuming. Moreover, the design of \textsf{FPTAS}es in~\cite{cao2017optimizing,yu2018application} leverages the primal-dual algorithm, where their primal problems and associated dual problems need to be casted as linear programs. It is unclear how to extend their technique to the general scenario where the utility of an unicast can be a concave function with the achieved throughput.

We note that there exist other maximum-delay-aware studies in the literature. However, they only develop heuristic approaches instead of approximation algorithms. For example, Liu \emph{et al.}~\cite{liu2016delay} target the multicast maximum delay optimization problems. Their heuristic approach suffers from two limitations: (i) the running time could be high because the number of variables increases exponentially in the network size, and (ii) there is not yet theoretical performance guarantee of the achieved solution.

Instead of modeling link delay as a constant within a capacity as in~\cite{misra2009polynomial,zhang2010reliable,cao2017optimizing,yu2018application,liu2016delay}, there exist studies which model the link delay as a link-flow-dependent function. For example, Correa \emph{et al.}~\cite{correa2004computational,correa2007fast} minimize maximum delay with delay-function-dependent approximation ratios guaranteed. Liu \emph{et al.}~\cite{my} minimize maximum delay with constant approximation ratios guaranteed. Our study models link delay as a constant within a capacity, which is the same as those in~\cite{misra2009polynomial,zhang2010reliable,cao2017optimizing,yu2018application,liu2016delay}, but different from the ones in~\cite{correa2004computational,correa2007fast,my}. We remark that maximum-delay-aware problems are fundamentally different with these different link delay models, since it is APX-hard to minimize the single-unicast maximum delay (hence no \textsf{PTAS} exists unless P = NP) with the flow-dependent delay model~\cite{correa2007fast}, but an \textsf{FPTAS}\footnote{Unless P = NP, it holds that $\textsf{FPTAS}\subsetneq\textsf{PTAS}$ in that the runtime of a \textsf{PTAS} is required to be polynomial in problem input but not $1/\epsilon$, while the runtime of an \textsf{FPTAS} is polynomial in both the problem input and $1/\epsilon$~\cite{WikiPTAS}.} exists to minimize the single-unicast maximum delay with the constant delay model~\cite{misra2009polynomial}.

Overall, with the constant delay model, existing maximum-delay-aware studies focus on either the throughput-constrained maximum delay minimization problem or the maximum-delay-constrained throughput maximization problem, which are just special cases of our problem (Tab.~\ref{tab:relatedwork}). To design approximation algorithms, they rely on a technique of solving problems in expanded networks iteratively, leading to impractically high time complexities (e.g., at least $O(|E|^3|V|^4\mathcal{L})$ to minimize single-unicast maximum delay where $|V|$ is number of nodes, $|E|$ is number of links, and $\mathcal{L}$ is input size of the given problem instance~\cite{misra2009polynomial}). It is unclear how to generalize their techniques to our multiple-unicast utility maximization scenario, where the utility of an unicast is a concave function of the achieved throughput or the experienced maximum delay. In sharp contrast, we develop an approximation algorithm for our problem of maximizing utilities, by leveraging a novel understanding between non-convex maximum-delay-aware problems and their convex average-delay-aware counterparts. Specifically, we solve an average-delay-aware problem only once in the input network, and then deletes certain flow rate from individual unicast flows, resulting in a small time complexity (e.g., $O(|E|^3\mathcal{L})$ to minimize single-unicast maximum delay in a dense network (Thm.~\ref{thm:sufficient-condition}). 

\subsection{Our Contributions}
In this paper, we study a multiple-unicast flow problem of maximizing aggregate user utilities over a multi-hop network, subject to link capacity constraints, maximum delay constraints, and user throughput requirements. We make the following contributions.

$\mbox{\ensuremath{\rhd}}$
We prove that it is NP-complete either (i) to construct a feasible solution meeting all constraints, or (ii) to obtain an optimal solution after we relax maximum delay constraints or throughput requirements up to constant ratios, due to the need of simultaneously considering maximum delay constraints and user throughput requirements. 

$\mbox{\ensuremath{\rhd}}$
We design an algorithm named \textsf{PASS} (Polynomial-time Algorithm Supporting utility-maximal flows Subject to throughput/delay constraints) for constructing approximate solutions to our problem in a polynomial time. Our design leverages a novel understanding between non-convex maximum-delay-aware problems and their convex average-delay-aware counterparts, which can be of independent interest and suggests a new avenue for solving maximum-delay-aware network optimization problems.

$\mbox{\ensuremath{\rhd}}$
We characterize sufficient conditions for \textsf{PASS} to solve our problem in a polynomial time, providing (i) a constant approximation ratio after relaxing throughput requirements and maximum delay constraints by constant ratios, or (ii) a problem-dependent approximation ratio satisfying maximum delay constraints, after relaxing throughput requirements by a problem-dependent ratio, or (iii) a problem-dependent approximation ratio satisfying throughput requirements, after relaxing maximum delay constraints by a problem-dependent ratio. We note that one can use pre-scaled maximum delay constraints or throughput requirements as the input to \textsf{PASS} to generate feasible solutions as the output. 

$\mbox{\ensuremath{\rhd}}$
We observe that our characterized conditions are satisfied in many popular application settings, where \textsf{PASS} can be applied with strong theoretical performance guarantee. Representative settings include minimizing throughput-constrained maximum delay and maximizing maximum-delay-constrained network utility. We evaluate the empirical performance of \textsf{PASS} in simulations of supporting video-conferencing traffic across Amazon EC2 datacenters. Compared to existing algorithms as well as a conceivable baseline, \textsf{PASS} can obtain up to $100\%$ improvement of utilities, by meeting throughput requirements but relaxing maximum delay constraints that are acceptable for video conferencing applications.

\section{System Model}\label{sec:system-model}
\subsection{Preliminary}
We consider a multi-hop network modeled as a directed graph $G \triangleq (V,E)$ with $|V|$ nodes
and $|E|$ links. Each link $e \in E$ has a constant capacity $c_e\ge 0$ and a constant delay $d_e\ge 0$. For each link $e\in E$, data streamed to $e$ experiences a delay of $d_e$ to pass it, and the rate of streaming data to $e$ must be within the capacity $c_e$. We are given $K$ users, where for each user $i$ ($i=1,2,...,K$), a source $s_i \in V$ needs to stream a single-unicast network flow to a destination $t_i \in V\backslash\{s_i\}$, possibly using multiple paths.

We denote $P_i$ as the set of all simple paths from $s_i$ to $t_i$, and $P\triangleq\cup_{i=1}^K P_i$. For any $p \in P$, its path delay $d^p$ is defined as
\be
d^p ~~\triangleq~~ \sum_{e\in E:e\in p} d_e, \nnb
\ee
i.e., the summation of link delays along the path. We denote a multiple-unicast network flow solution as $f \triangleq\{f_i,i=1,2,...,K\}$, where a single-unicast flow $f_i$ is defined as the assigned flow rate over $P_i$, i.e., $f_i \triangleq \{x^p: x^p \ge 0, p \in P_i\}$. For $f_i$, we define
\be
x_i^e~~ \triangleq~~ \sum_{p\in P_i: e \in p} x^p\nnb
\ee
as the aggregated link rate of $e\in E$ of the unicast $i$ (or the user $i$ equivalently). Similarly, we denote $x_e$ as the total aggregated link rate of link $e\in E$, and
\be
x_e~~\triangleq~~ \sum_{i=1}^{K}x_i^e~~=~~\sum_{p\in P:e\in p}x^p.\nnb
\ee
We further denote the flow rate, or the \textbf{throughput} equivalently, achieved by a single-unicast flow $f_i$ by $|f_i|$,
\be\label{eqn:rate}
|f_i|~~\triangleq~~\sum_{p\in P_i}x^p~~=~~\sum_{e\in\textsf{Out}(s_i)}x_i^e~~=~~\sum_{e\in\textsf{In}(t_i)}x_i^e,\nnb
\ee
where $\textsf{Out}(v)$ (resp. \textsf{In}(v)) is the set of outgoing (resp. incoming) links of $v$. The \textbf{maximum delay} experienced by $f_i$ is defined as
\be
\mathcal{M}(f_i)~~ \triangleq~~ \max_{p \in P_i: x^p>0}  d^p,\nnb
\ee
i.e., the delay of the longest (slowest) path with positive rates from $s_i$ to $t_i$\footnote{We call a path $p\in P_i$ with $x^p>0$ as a flow-carrying path of $f_i$.}. The total delay of $f_i$ is defined as
\be
\mathcal{T}(f_i)~~ \triangleq~~ \sum_{p\in P_i}  (x^p\cdot d^p)~~=~~\sum_{e\in E} (x_i^e\cdot d_e).\nnb
\ee
With $\mathcal{T}(f_i)$, we can easily define the \textbf{average delay} experienced by $f_i$ as $\mathcal{A}(f_i) \triangleq\mathcal{T}(f_i)/|f_i|$, and we let $\mathcal{A}(f_i)=0$ if $|f_i|=0$.

For each $f_i$, $i=1,2,...,K$, we denote its \textbf{throughput-based utility} as $\mathcal{U}_i^t(|f_i|)$, which is a function that rewards $f_i$ based on the achieved throughput. Similarly, we denote its \textbf{maximum-delay-based utility} as $-\mathcal{U}_i^d(\mathcal{M}(f_i))$, where $\mathcal{U}_i^d(\mathcal{M}(f_i))$ is a function that penalizes $f_i$ based on the experienced maximum delay. 

\subsection{Problem Definition}
In this paper, we study the following problem of Maximizing aggregate user Utilities subject to link capacity constraints, maximum Delay constraints, and Throughput requirements (\textbf{\textsf{MUDT}}),
\bse\label{eqn:MUDT}
\bee
(\textsf{MUDT}): \text{obj:} && \text{either }\max~ \sum_{i=1}^{K}\mathcal{U}_i^t(|f_i|), \label{eqn:OUDT-obj-1} \\
&& \text{or } \max~ -\sum_{i=1}^{K}\mathcal{U}_i^d(\mathcal{M}(f_i)),\label{eqn:OUDT-obj-2}\\
\text{s.t.} && |f_i|~\ge~ R_i,~~\forall i=1,2,...,K,\label{eqn:OUDT-throughput}\\
&& \mathcal{M}(f_i)~\le~D_i,~~\forall i=1,2,...,K,\label{eqn:OUDT-delay}\\
&& f=\{f_1,f_2,...,f_K\}\in \mathcal{X},\label{eqn:OUDT-feasible}
\eee
\ese
where $\mathcal{X}$ defines a feasible multiple-unicast flow $f$ meeting flow conservation constraints and link capacity constraints, i.e.,
\bee
\mathcal{X} & \triangleq \Bigg\{ \sum_{e\in \textsf{Out}(s_i)}x_i^e=\sum_{e\in \textsf{In}(t_i)}x_i^e = |f_i|,~\forall 1\le i\le K,\nnb\\
& \sum_{e\in \textsf{Out}(v)}x_i^e=\sum_{e\in \textsf{In}(v)}x_i^e,~\forall v\in V\backslash\{s_i,t_i\},~\forall 1\le i\le K,\nnb\\
& \sum_{i=1}^{K}x_i^e\le c_e,\forall e\in E,\text{~vars:~}x_i^e\ge 0,\forall e\in E,\forall 1\le i\le K\Bigg\}.\nnb
\eee

In formula~\eqref{eqn:MUDT}, the objective~\eqref{eqn:OUDT-obj-1} (resp.~\eqref{eqn:OUDT-obj-2}) maximizes the aggregate throughput-based utilities (resp. maximum-delay-based utilities) of all the users, the throughput requirements~\eqref{eqn:OUDT-throughput} require the throughput achieved by each user $i$ to be no smaller than $R_i$, the maximum delay constraints~\eqref{eqn:OUDT-delay} restrict the maximum delay experienced by each user $i$ to be no greater than $D_i$, and the feasibility constraint~\eqref{eqn:OUDT-feasible} defines a feasible multiple-unicast network flow solution, meeting link capacity constraints.

In the end of this section, we give an important theorem of \textsf{MUDT}, which argues that it is impossible even to (i) construct a feasible solution meeting all constraints, or (ii) obtain an optimal solution meeting relaxed constraints, in a polynomial time, unless P = NP. Thus it is non-trivial to develop polynomial-time approximation algorithms for \textsf{MUDT} subject to relaxed constraints.

\begin{theorem}\label{rmk:NP-hard}
	For \textsf{MUDT}, it is NP-complete (i) to construct a feasible solution that meets all constraints, or (ii) to obtain an optimal solution that meets throughput requirements but relaxes maximum delay constraints, or (iii) to obtain an optimal solution that meets maximum delay constraints but relaxes throughput requirements.
\end{theorem}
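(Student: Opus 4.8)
The plan is to prove that each of the three tasks lies in NP and then to establish NP-hardness by reductions from \textsc{Partition}. Membership in NP is the first thing I would nail down, because a feasible $f$ formally lives in the exponentially large path space. The point is that it still admits a polynomial-size certificate: for each user $i$, consider the polytope of path-rate vectors supported only on the $s_i$--$t_i$ paths of delay at most $D_i$ (or at most the relaxed bound in parts (ii)--(iii)), intersected with the capacity constraints $\sum_i x_i^e\le c_e$ and the throughput bounds $|f_i|\ge R_i$; feasibility of this polytope is equivalent to feasibility of the task, and any basic feasible point has at most $|E|+K$ nonzero path variables, each of polynomially bounded bit length. Hence a verifier can guess $O(|E|+K)$ flow-carrying paths with rational rates and check in polynomial time flow conservation, the capacities, the throughput bounds, the per-path delay bounds $d^p\le D_i$, and (for (ii)--(iii)) that the objective matches the target value emitted by the reduction. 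So all three tasks are in NP.

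For hardness of (i) I would reduce from \textsc{Partition}: given positive integers $a_1,\dots,a_n$ with $\sum_j a_j=2B$, build a single-user \textsf{MUDT} instance (with a constant utility, so the objective is irrelevant) consisting of $n$ gadgets in series from $s_1$ to $t_1$. Gadget $j$ has two parallel edges from $v_{j-1}$ to $v_j$: a ``top'' edge of delay $a_j$ and capacity $R/2$, and a ``bottom'' edge of delay $0$ and capacity $R/2$; set $R_1=R$ and $D_1=B$. Each gadget is an $s_1$--$t_1$ cut of capacity $R$, so any routing of $R$ units saturates both edges of every gadget, and in a path decomposition each flow-carrying path picks, per gadget, top or bottom, with delay exactly $\sum_{j\in S}a_j$ where $S$ is the set of gadgets in which it picks top. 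Weighting each path's delay by its rate and summing per gadget via the capacity-$R/2$ bounds gives $\sum_p r_p\sum_{j\in S_p}a_j=\sum_j a_j\sum_{p:j\in S_p}r_p\le (R/2)\cdot 2B = BR$, while each path's delay being $\le B$ gives $\sum_p r_p\sum_{j\in S_p}a_j\le BR$; since both bounds are tight, every flow-carrying path satisfies $\sum_{j\in S_p}a_j=B$. Conversely, if some subset sums to $B$, routing $R/2$ on the path for that subset and $R/2$ on the path for its complement is feasible. Thus the instance is feasible iff the \textsc{Partition} instance is a yes-instance, proving (i) is NP-complete.

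For (ii) and (iii) I would reuse this construction. First, the reduction of (i) works verbatim with the delay bound $D_1$ replaced by any prescribed value, by rescaling the $a_j$ (which remain rationals of polynomial bit length), so the \emph{relaxed} feasibility problems are themselves NP-complete. To turn this into a statement about \emph{optimal} solutions, embed such a (relaxed) gadget into a slightly larger instance with a nontrivial objective: add a second user whose throughput-based utility is the maximization objective and who competes with the first user for a bottleneck link, arranged so that the first user can meet its throughput requirement off the bottleneck exactly when the embedded \textsc{Partition} instance is a yes-instance; then the optimal objective equals a target $V^\star$ in the yes-case and is strictly below $V^\star$ in the no-case, and the same holds for the delay-relaxed (resp., symmetrically, throughput-relaxed) problem. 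A polynomial-time algorithm returning an optimal solution of the relaxed problem would then decide \textsc{Partition} by comparing the achieved objective against $V^\star$, giving NP-completeness of (ii) and (iii).

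The main obstacle is getting the gap in (ii)--(iii) to be robust under a genuinely multiplicative relaxation of the constraints. If ``delay'' in the reduction were nothing more than an additive subset-sum, then a constant-factor relaxation of the delay budget would turn the hard question into an \emph{approximate} subset-sum, which is polynomially solvable, and the argument would collapse; the same danger arises for a multiplicative relaxation of the throughput requirements. The fix --- and the part that requires care --- is to route the hard choice through the capacity constraints (which in this paper's delay model make an overloaded route outright infeasible rather than merely slow), so that enlarging the delay budget, or shrinking the required throughput, by a constant factor does not enlarge the set of usable routes and hence cannot bridge the yes/no gap. Verifying this robustness, and that the two users' interaction yields a clean $V^\star$-versus-below-$V^\star$ dichotomy under both relaxations, is where the real work lies.
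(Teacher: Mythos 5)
Your part (i) and the NP-membership argument are fine: the certificate via basic feasible points of the path-flow polytope is standard, and your series-of-gadgets construction (top edge of delay $a_j$ and capacity $R/2$, bottom edge of delay $0$ and capacity $R/2$, with $R_1=R$, $D_1=B$) together with the saturation-plus-averaging argument correctly shows feasibility iff the \textsc{Partition} instance is a yes-instance. This is essentially the reduction the paper simply imports by citing the appendix of Misra \emph{et al.}~\cite{misra2009polynomial}, so for (i) you have merely made the cited argument explicit.

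The genuine gap is in (ii) and (iii). The paper does not need any gap-robustness argument there: it picks single-user special cases of \textsf{MUDT} in which the constraint being relaxed is \emph{vacuous} and the hardness sits in the objective, which the formulation of \textsf{MUDT} explicitly allows. For (ii) it uses objective~\eqref{eqn:OUDT-obj-2} (maximum-delay-based utility) with $D_1=+\infty$, so that any relaxation of the delay constraint leaves exactly the throughput-constrained maximum-delay minimization problem, known to be NP-complete~\cite{misra2009polynomial}; for (iii) it uses objective~\eqref{eqn:OUDT-obj-1} with $\mathcal{U}_1^t(|f_1|)=|f_1|$ and $R_1=0$, so that any relaxation of the throughput requirement leaves the maximum-delay-constrained throughput maximization problem, which is NP-complete by the same kind of reduction. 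Note this covers constant-ratio relaxation automatically, since scaling $+\infty$ or $0$ changes nothing. You instead attempt to show hardness of optimization under genuinely constant-factor-relaxed constraints via a two-user bottleneck construction with a target value $V^\star$, and you yourself concede that verifying the yes/no dichotomy survives the relaxation ``is where the real work lies'' --- it is, and it is not done. Indeed, in your own gadget the hardness hinges on the exact budget $D_1=B$: once the delay bound is relaxed by a factor of $2$, every routing of $R$ units is feasible (every $s_1$--$t_1$ path has delay at most $2B$), so the yes/no gap vanishes, and the proposed repair of ``routing the hard choice through the capacity constraints'' is only a sketch with no construction or proof. As written, therefore, parts (ii) and (iii) are not established; the missing idea is the paper's trick of making the relaxed constraint trivial and encoding the hardness in the maximum-delay-based (resp.\ throughput-based) utility objective.
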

\begin{proof}
	Refer to our Appendix~\ref{adx:rmk-hard}.
\end{proof}
\section{Proposed Algorithm \textsf{PASS}}\label{sec:framework}
In this section we design an algorithm \textsf{PASS} for \textsf{MUDT} of maximizing aggregate user utilities. We characterize conditions of the input utility functions such that \textsf{PASS} theoretically gives approximate solutions in a polynomial time, meeting relaxed constraints. 
\subsection{Algorithmic Structure of \textsf{PASS}}
We note that the non-convex maximum delays bring difficulties for solving \textsf{MUDT}. The key idea of our proposed \textsf{PASS} is to replace the non-convex maximum delays in \textsf{MUDT} by the convex average delays, and solve the average-delay-aware counterpart to obtain an approximate solution to \textsf{MUDT} in a polynomial time. (i) We denote the average-delay-aware counterpart of the \textsf{MUDT} that maximizes throughput-based utilities, i.e., problem~\eqref{eqn:MUDT} with an objective of~\eqref{eqn:OUDT-obj-1}, as \textbf{\textsf{MUAT-T}}, with the following formulation
\bse\label{eqn:OUAT-T}
\bee
(\textsf{MUAT-T}): \text{obj:} && \max~ \sum_{i=1}^{K}\mathcal{U}_i^t(|f_i|),  \\
\text{s.t.} && |f_i|~\ge~ R_i,~~\forall i=1,2,...,K,\label{eqn:OUAT-T-throughput}\\
&& \mathcal{T}(f_i)~\le~D_i\cdot |f_i|,~~\forall i=1,2,...,K,\label{eqn:OUAT-T-delay}\\
&& f=\{f_1,f_2,...,f_K\}\in \mathcal{X}.\label{eqn:OUAT-T-feasible}
\eee
\ese
(ii) Similarly, we denote the average-delay-aware counterpart of the \textsf{MUDT} that maximizes maximum-delay-based utilities, i.e., problem~\eqref{eqn:MUDT} with an objective of~\eqref{eqn:OUDT-obj-2}, as \textbf{\textsf{MUAT-M}}. \textsf{MUAT-M} has the following formulation
\bse\label{eqn:OUAT-D}
\bee
(\textsf{MUAT-M}): \text{obj:} && \max~ -\sum_{i=1}^{K}\mathcal{U}_i^d\left(\frac{\mathcal{T}(f_i)}{R_i}\right),  \\
\text{s.t.} && |f_i|~=~ R_i,~~\forall i=1,2,...,K,\label{eqn:OUAT-D-throughput}\\
&& \mathcal{T}(f_i)~\le~D_i\cdot R_i,~~\forall i=1,2,...,K,\label{eqn:OUAT-D-delay}\\
&& f=\{f_1,f_2,...,f_K\}\in \mathcal{X}.\label{eqn:OUAT-D-feasible}
\eee
\ese

Algorithm~\ref{algorithm:framework} describes the details of \textsf{PASS}. It first solves the average-delay-aware counterpart of the \textsf{MUDT} and obtain the corresponding multiple-unicast flow solution $f=\{f_i,i=1,2,...,K\}$ (line~\ref{line:original-solution}). Next for each $i=1,2,...,K$, we delete a rate of $\epsilon\cdot |f_i|$ iteratively from the slowest flow-carrying paths of $f_i$ (line~\ref{line:while}). In the end, the remaining flow is the solution returned by \textsf{PASS}. 
\subsection{\textsf{PASS} can Solve \textsf{MUDT} Approximately, Meeting Relaxed Constraints}\label{sec:condition}
Now we give an important lemma which will be used later to prove the approximation ratio of our \textsf{PASS}.  

\begin{algorithm} [t]
	\caption{Our Proposed Algorithm \textsf{PASS}}\label{algorithm:framework}
	\begin{algorithmic}[1]
		\State \textbf{input}: Problem~\eqref{eqn:MUDT}, $\epsilon \in (0,1)$
		\State \textbf{output}: $f=\{f_i,i=1,2,...,K\}$
		\Procedure{}{}
		\State \parbox[t]{\dimexpr\linewidth-\algorithmicindent}{Formulate either problem~\eqref{eqn:OUAT-T} or problem~\eqref{eqn:OUAT-D} that is the average-delay-aware counterpart of the input problem~\eqref{eqn:MUDT}}\label{line:hatP} 
		\State \parbox[t]{\dimexpr\linewidth-\algorithmicindent}{Solve the average-delay-aware problem and get the solution $f=\{f_i,i=1,2,...,K\}$}\label{line:original-solution}
		\State $x_i^{\textsf{delete}} = \epsilon\cdot |f_i|,\forall i=1,2,...,K$
		\For{$i=1,2,...,K$}
		\While{$x_i^{\textsf{delete}} > 0$}\label{line:while} 
		\State \parbox[t]{\dimexpr\linewidth-\algorithmicindent}{Find the slowest flow-carrying path $p_i\in P_i$ \strut}
		\If{$x^{p_i} > x_i^{\textsf{delete}}$} 
		\State $x^{p_i} = x^{p_i} - x_i^{\textsf{delete}},~x_i^{\textsf{delete}} = 0$
		\Else \label{line:delete-notenough}
		\State $x_i^{\textsf{delete}} = x_i^{\textsf{delete}}-x^{p_i},~x^{p_i} = 0$
		\EndIf
		\EndWhile
		\EndFor
		\State \textbf{return} the remaining flow $f=\{f_i,i=1,2,...,K\}$\label{line:final-solution}
		\EndProcedure
	\end{algorithmic}
\end{algorithm}

\begin{lemma}\label{lem:framework}
	In Algorithm~\ref{algorithm:framework} with an arbitrary $\epsilon\in(0,1)$, suppose $\hat{f}=\{\hat{f}_i,i=1,2,...,K\}$ is the solution to the average-delay-aware counterpart of \textsf{MUDT} (solution achieved in line~\ref{line:original-solution}), and suppose $\bar{f}=\{\bar{f}_i,i=1,2,...,K\}$ is the solution returned in the end (the remaining solution achieved in line~\ref{line:final-solution}). For any $i=1,2,...,K$, we have  
	%
	%
	\be
	\mathcal{T}\left(\bar{f}_i\right) + \epsilon\cdot \left|\hat{f}_i\right| \cdot \mathcal{M}\left(\bar{f}_i\right)
	\le \mathcal{T}\left(\hat{f}_i\right).
	\label{equ:total-dely-decrease-SO-D}
	\ee	
\end{lemma}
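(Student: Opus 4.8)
The plan is to track what happens to a single unicast flow $f_i$ as we iteratively strip off a total rate of $\epsilon \cdot |\hat f_i|$ from its slowest flow-carrying paths, and to bound the total delay $\mathcal{T}(\bar f_i)$ of the remaining flow in terms of the total delay $\mathcal{T}(\hat f_i)$ of the original flow. First I would fix $i$ and set $\delta \triangleq \epsilon \cdot |\hat f_i|$, the amount of rate removed. The key observation is that every unit of rate removed lives on a path whose delay is at least $\mathcal{M}(\bar f_i)$, the maximum delay of the \emph{remaining} flow: indeed, the algorithm always deletes from the \emph{slowest} current flow-carrying path, and since we never touch a path that is slower than one still carrying flow at the end, each deleted path has delay $\ge \mathcal{M}(\bar f_i)$. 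Concretely, write $\hat f_i$ as a sum of path flows $\{x^p\}_{p\in P_i}$ and let $\bar f_i$ correspond to the reduced path flows $\{\bar x^p\}$; then for each $p$ the removed amount $(x^p - \bar x^p) \ge 0$ satisfies: if $x^p - \bar x^p > 0$, then $d^p \ge \mathcal{M}(\bar f_i)$.

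Next I would compute directly:
\be
\mathcal{T}(\hat f_i) - \mathcal{T}(\bar f_i) = \sum_{p \in P_i} (x^p - \bar x^p)\, d^p \ge \mathcal{M}(\bar f_i) \sum_{p\in P_i} (x^p - \bar x^p) = \mathcal{M}(\bar f_i)\cdot \delta, \nnb
\ee
where the inequality uses the observation above (terms with $x^p - \bar x^p = 0$ contribute nothing, and the rest have $d^p \ge \mathcal{M}(\bar f_i)$), and the last equality holds because the total rate removed is exactly $\delta = \epsilon \cdot |\hat f_i|$ (the while-loop terminates precisely when $x_i^{\textsf{delete}}$ reaches $0$). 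Rearranging gives $\mathcal{T}(\bar f_i) + \epsilon\cdot |\hat f_i|\cdot \mathcal{M}(\bar f_i) \le \mathcal{T}(\hat f_i)$, which is exactly~\eqref{equ:total-dely-decrease-SO-D}.

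The main obstacle, and the step that deserves the most care, is justifying rigorously that \emph{every} deleted path has delay at least $\mathcal{M}(\bar f_i)$ — in particular ruling out the subtle case where the algorithm partially drains a path $p_i$ (the branch $x^{p_i} > x_i^{\textsf{delete}}$) so that $p_i$ still carries flow at the end yet was chosen as "slowest" at some earlier iteration. Here I would argue by monotonicity: at each iteration the algorithm picks the slowest among currently flow-carrying paths, and draining a path can only remove paths from consideration, so the delay of the path selected is nonincreasing across iterations within the loop for user $i$; hence the last path ever touched has the smallest delay among all touched paths, and any path still carrying flow at termination has delay no larger than it only if... — more cleanly: any path $q$ with $\bar x^q > 0$ was never selected while still carrying positive flow after the selection, so at the moment any deleted path $p$ was selected, $q$ was a candidate, forcing $d^p \ge d^q$; taking the max over such $q$ yields $d^p \ge \mathcal{M}(\bar f_i)$. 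One should also note the edge case $|\hat f_i| = 0$, where $\delta = 0$, nothing is deleted, $\bar f_i = \hat f_i$, and~\eqref{equ:total-dely-decrease-SO-D} holds trivially. Everything else — expanding $\mathcal{T}$ as a sum over paths, bookkeeping the removed rate — is routine.
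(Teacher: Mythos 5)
Your proof is correct and rests on the same key fact as the paper's: every unit of deleted rate sits on a path whose delay is at least $\mathcal{M}(\bar{f}_i)$, because any path still flow-carrying at termination was flow-carrying (rates only decrease) and hence a candidate whenever a ``slowest'' path was selected. The only difference is packaging — you aggregate the deletions per path in one shot, whereas the paper runs an iteration-by-iteration telescoping sum using the monotonicity $\mathcal{M}(f_i^{n+1})\le\mathcal{M}(f_i^{n})$ — so the two arguments are essentially the same, and your handling of the partially drained path and of the $|\hat{f}_i|=0$ edge case is sound.
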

\begin{proof}
	Refer to our Appendix~\ref{adx:lem-frame}.
\end{proof}

Lem.~\ref{lem:framework} implies that $\epsilon\cdot\mathcal{M}(\bar{f}_i)\le \mathcal{A}(\hat{f}_i),\forall i=1,2,...,K$, i.e., the maximum delay of each single-unicast flow after deleting rate is bounded by a constant ratio as compared to the average delay of the corresponding single-unicast flow before deleting rate. With this critical observation that relates the non-convex maximum delays with the convex average delays, we can characterize conditions for \textsf{PASS} to solve \textsf{MUDT} approximately in a polynomial time.

\begin{theorem}\label{thm:sufficient-condition}
	Given a feasible problem~\eqref{eqn:MUDT}, suppose we use \textsf{PASS} (Algorithm~\ref{algorithm:framework}) with an arbitrary $\epsilon\in(0,1)$ to solve it. If the problem is feasible, meeting all conditions below
	\begin{enumerate}
		\item for each $i=1,2,...,K$, for an arbitrary $a\ge 0$, $\mathcal{U}_i^t(a)$ is concave, non-decreasing, and non-negative with $a$, $\mathcal{U}_i^d(a)$ is convex, non-decreasing, and non-negative with $a$,\label{ass:convex-opt}
		\item \label{ass:penalty-convex} for an arbitrary $a\ge 0$, the following holds given any $\sigma\ge1$
		\be
		\mathcal{U}_i^d(\sigma\cdot a)~\le~ \sigma\cdot \mathcal{U}_i^d(a),~~\forall i=1,2,...,K,\nnb
		\ee 
	\end{enumerate}
	then \textsf{PASS} must return a solution $\bar{f}=\{\bar{f}_i,i=1,...,K\}$ in a polynomial time, meeting the following relaxed constraints
	\bse
	\label{eqn:condition-constraints}
	\bee
	&& \left|\bar{f}_i\right|~\ge~(1-\epsilon)\cdot R_i,~~\forall i=1,2,...,K,\label{eqn:relaxed-throughput} \\
	&& \mathcal{M}\left(\bar{f}_i\right)~\le~ D_i/\epsilon,~~\forall i=1,2,...,K, \label{eqn:relaxed-delay}\\
	&& \bar{f}=\{\bar{f}_1,\bar{f}_2,...,\bar{f}_K\}\in \mathcal{X}.\label{eqn:relaxed-feasible}
	\eee
	\ese
	Suppose $f^*=\{f_i^*,i=1,2,...,K\}$ is the optimal solution to the problem~\eqref{eqn:MUDT}. If the throughput-based utility maximization~\eqref{eqn:OUDT-obj-1} is the objective, $\bar{f}$ provides the following approximation ratio
	\be
	\sum_{i=1}^{K}\mathcal{U}_i^t\left(\left|\bar{f}_i\right|\right)~\ge~ (1-\epsilon)\cdot \sum_{i=1}^{K}\mathcal{U}_i^t\left(\left|f_i^*\right|\right).\\
	\ee
	If the maximum-delay-based utility maximization~\eqref{eqn:OUDT-obj-2} is the objective, $\bar{f}$ provides the following approximation ratio
	\be
	\sum_{i=1}^{K}\mathcal{U}_i^d\left(\mathcal{M}\left(\bar{f}_i\right)\right)~\le~ \frac{1}{\epsilon}\cdot \sum_{i=1}^{K}\mathcal{U}_i^d\left(\mathcal{M}\left(f_i^*\right)\right).\\
	\ee
\end{theorem}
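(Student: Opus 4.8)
The plan is to prove the three assertions in order — polynomial running time, the relaxed constraints~\eqref{eqn:condition-constraints}, and then the two approximation ratios — using Lemma~\ref{lem:framework} and the elementary inequality $\mathcal{A}(f_i)\le\mathcal{M}(f_i)$ (valid for every single-unicast flow, since $\mathcal{T}(f_i)=\sum_{p\in P_i}x^p d^p\le|f_i|\mathcal{M}(f_i)$) as the two workhorses. For the running time I would argue that the average-delay-aware counterpart solved in line~\ref{line:original-solution} is a polynomial-size convex program: written over the edge-rate variables $x_i^e$, the set $\mathcal{X}$ is a polytope, $|f_i|$ and $\mathcal{T}(f_i)$ are linear, so constraints~\eqref{eqn:OUAT-T-throughput}--\eqref{eqn:OUAT-T-delay} and~\eqref{eqn:OUAT-D-throughput}--\eqref{eqn:OUAT-D-delay} are linear, and by Condition~\ref{ass:convex-opt} the objective $\sum_i\mathcal{U}_i^t(|f_i|)$ is concave (concave non-decreasing composed with a linear map) while $\sum_i\mathcal{U}_i^d(\mathcal{T}(f_i)/R_i)$ is convex; hence it is solvable in polynomial time. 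A flow/path decomposition of the resulting edge flow (discarding cycles, which only lowers $\mathcal{T}(f_i)$ and keeps $\mathcal{X}$) yields at most $|E|$ flow-carrying paths per unicast, so each pass of the while-loop in line~\ref{line:while} that does not terminate zeroes out at least one such path, and the whole deletion phase runs in polynomially many iterations.

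For the relaxed constraints~\eqref{eqn:condition-constraints}: $\bar f\in\mathcal{X}$ holds because deleting an entire flow-carrying path preserves flow conservation and only decreases link rates. The throughput bound~\eqref{eqn:relaxed-throughput} follows since \textsf{PASS} removes exactly $\epsilon|\hat f_i|$ from unicast $i$, so $|\bar f_i|=(1-\epsilon)|\hat f_i|\ge(1-\epsilon)R_i$, using $|\hat f_i|\ge R_i$ for \textsf{MUAT-T} by~\eqref{eqn:OUAT-T-throughput} and $|\hat f_i|=R_i$ for \textsf{MUAT-M} by~\eqref{eqn:OUAT-D-throughput}. The delay bound~\eqref{eqn:relaxed-delay} comes from Lemma~\ref{lem:framework}: dropping the non-negative term $\mathcal{T}(\bar f_i)$ gives $\epsilon|\hat f_i|\mathcal{M}(\bar f_i)\le\mathcal{T}(\hat f_i)$, i.e. $\mathcal{M}(\bar f_i)\le\mathcal{A}(\hat f_i)/\epsilon$, and $\mathcal{A}(\hat f_i)\le D_i$ holds for \textsf{MUAT-T} directly by~\eqref{eqn:OUAT-T-delay} and for \textsf{MUAT-M} by~\eqref{eqn:OUAT-D-delay} combined with $|\hat f_i|=R_i$.

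For the approximation ratios, the key structural step I would establish is that the feasible region of \textsf{MUDT} injects into that of its average-delay-aware counterpart. If $\mathcal{M}(f_i^*)\le D_i$ for all $i$, then $\mathcal{T}(f_i^*)\le D_i|f_i^*|$, so $f^*$ is \textsf{MUAT-T}-feasible; for \textsf{MUAT-M} I would first trim $f^*$ by deleting rate from slowest flow-carrying paths until $|f_i^*|=R_i$, which by the mechanism of Lemma~\ref{lem:framework} does not increase any $\mathcal{M}(f_i^*)$ (and hence, by monotonicity of $\mathcal{U}_i^d$, does not worsen the \textsf{MUDT} objective value being compared against), and the trimmed flow is then \textsf{MUAT-M}-feasible. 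In the throughput case, optimality of $\hat f$ for \textsf{MUAT-T} gives $\sum_i\mathcal{U}_i^t(|\hat f_i|)\ge\sum_i\mathcal{U}_i^t(|f_i^*|)$, and concavity together with $\mathcal{U}_i^t(0)\ge0$ gives $\mathcal{U}_i^t((1-\epsilon)|\hat f_i|)\ge(1-\epsilon)\mathcal{U}_i^t(|\hat f_i|)+\epsilon\mathcal{U}_i^t(0)\ge(1-\epsilon)\mathcal{U}_i^t(|\hat f_i|)$; summing over $i$ yields the claimed $(1-\epsilon)$ bound. In the max-delay case I would chain $\mathcal{U}_i^d(\mathcal{M}(\bar f_i))\le\mathcal{U}_i^d(\mathcal{A}(\hat f_i)/\epsilon)\le\tfrac1\epsilon\mathcal{U}_i^d(\mathcal{A}(\hat f_i))$ (monotonicity, then Condition~\ref{ass:penalty-convex} with $\sigma=1/\epsilon$), sum over $i$, invoke optimality of $\hat f$ for \textsf{MUAT-M} against the trimmed $f^*$ (noting $\mathcal{T}(f_i)/R_i=\mathcal{A}(f_i)$ when $|f_i|=R_i$), and finish with $\mathcal{A}(f_i^*)\le\mathcal{M}(f_i^*)$ and monotonicity of $\mathcal{U}_i^d$.

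I expect the only genuine subtleties to be: (a) pinning down that the convex program plus the deletion loop really run in polynomial time — handled by working in edge variables and then path-decomposing — and (b) the \textsf{MUAT-M} branch, where one must carefully justify the throughput-trimming reduction on $f^*$ so that a legitimate feasible comparison point for \textsf{MUAT-M} exists and inherits a no-larger max delay; everything else is bookkeeping layered on Lemma~\ref{lem:framework}, the monotonicity/convexity hypotheses, and $\mathcal{A}(f_i)\le\mathcal{M}(f_i)$.
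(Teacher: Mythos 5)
Your proposal is correct and follows essentially the same route as the paper's own proof: polynomial time via the convex program in edge variables plus flow decomposition bounding the deletion iterations, the relaxed constraints via Lemma~\ref{lem:framework} together with $\mathcal{A}(\hat f_i)\le D_i$, and the approximation ratios via embedding $f^*$ (trimmed to $|f_i|=R_i$ in the \textsf{MUAT-M} case) as a feasible point of the average-delay counterpart, then using concavity/non-negativity of $\mathcal{U}_i^t$ and Condition~\ref{ass:penalty-convex} with $\sigma=1/\epsilon$. The only cosmetic difference is that you trim $f^*$ along slowest flow-carrying paths while the paper trims arbitrary ones; both work since deleting rate never increases maximum delay.
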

\begin{proof}
Refer to our Appendix~\ref{adx:condition}.
\end{proof}

It is clear that \textsf{PASS} provides a constant approximation ratio, at the cost of violating throughput requirements~\eqref{eqn:OUDT-throughput} by a constant ratio of $(1-\epsilon)$, and
violating maximum delay constraints~\eqref{eqn:OUDT-delay} by a constant ratio of $1/\epsilon$. For certain applications, the throughput requirements or the maximum delay constraints are hard constraints that cannot be violated. We note that one can use pre-scaled maximum delay constraints and throughput requirements as the input to \textsf{PASS} to generate feasible solutions as the output. Moreover, in the following, by slightly modifying \textsf{PASS}, we respectively develop (i) an algorithm \textsf{PASS-M} to achieve approximate solutions that can strictly meet maximum delay constraints, and (ii) an algorithm \textsf{PASS-T} to achieve approximate solutions that can strictly meet throughput requirements.

\subsection{Modify \textsf{PASS} to Strictly Meet Maximum Delay Constraints}
\begin{algorithm} [t]
	\caption{\textsf{PASS-M}: Modify \textsf{PASS} to Strictly Meet Maximum Delay Constraints }\label{alg:PASS-M}
	\begin{algorithmic}[1]
		\State \textbf{input}: Problem~\eqref{eqn:MUDT}
		\State \textbf{output}: $f=\{f_i,i=1,2,...,K\}$
		\Procedure{}{}
		\State \parbox[t]{\dimexpr\linewidth-\algorithmicindent}{Solve the average-delay-aware counterpart of the problem~\eqref{eqn:MUDT}, and get the solution $f=\{f_i,i=1,2,...,K\}$}\label{line:PASS-M-original-solution}
		\For{$i=1,2,...,K$}
		\While{$\mathcal{M}(f_i)>D_i$}
		\State Find the slowest flow-carrying path $p_i\in P_i$
		\State Let $x^{p_i}=0$
		\EndWhile
		\EndFor
		\State \textbf{return} the remaining flow $f=\{f_i,i=1,2,...,K\}$
		\EndProcedure
	\end{algorithmic}
\end{algorithm}

We introduce \textsf{PASS-M} in Algorithm~\ref{alg:PASS-M}. Similar to \textsf{PASS}, \textsf{PASS-M} first solves the average-delay-aware counterpart of \textsf{MUDT}. But different from \textsf{PASS} that deletes $\epsilon\cdot|f_i|$ rate from slowest flow-carrying paths of each $f_i$, \textsf{PASS-M} deletes rate from slowest flow-carrying paths of $f_i$ till the maximum delay of $f_i$ strictly meets the constraint $D_i$. In the following theorem, we prove that \textsf{PASS-M} can obtain a solution with a problem-dependent approximation ratio.
\begin{theorem}\label{thm:PASS-M}
	Given a feasible problem~\eqref{eqn:MUDT}, suppose it meets all conditions in Thm.~\ref{thm:sufficient-condition}. Suppose we use \textsf{PASS-M} (Algorithm~\ref{alg:PASS-M}) to solve it. Then \textsf{PASS-M} must return a solution $\bar{f}=\{\bar{f}_i,i=1,2,...,K\}$ in a polynomial time, meeting the following relaxed constraints
	\bse
	\bee
	&& \left|\bar{f}_i\right|~\ge~(1-\epsilon_{\max})\cdot R_i,~~\forall i=1,2,...,K,\label{eqn:PASS-M-throughput}  \\
	&& \mathcal{M}\left(\bar{f}_i\right)~\le~ D_i,~~\forall i=1,2,...,K,\label{eqn:PASS-M-delay} \\
	&& \bar{f}=\{\bar{f}_1,\bar{f}_2,...,\bar{f}_K\}\in \mathcal{X},\label{eqn:PASS-M-feasible}
	\eee
	\ese
	where $\epsilon_{\max}$ is defined as follows
	\be
	\epsilon_{\max}~=~\max_{1\le i\le K}\left\{\left(\left|\hat{f}_i\right|-\left|\bar{f}_i\right|\right)/\left|\hat{f}_i\right|\right\},\nnb
	\ee
	where $\hat{f}=\{\hat{f}_i,i=1,2,...,K\}$ is the optimal solution to the average-delay-aware problem in line~\ref{line:PASS-M-original-solution} of Algorithm~\ref{alg:PASS-M}. Suppose $f^*=\{f_i^*,i=1,2,...,K\}$ is the optimal solution to problem~\eqref{eqn:MUDT}. If the throughput-based utility maximization~\eqref{eqn:OUDT-obj-1} is the objective, $\bar{f}$ provides the following approximation ratio
	\be
	\sum_{i=1}^{K}\mathcal{U}_i^t\left(\left|\bar{f}_i\right|\right)~\ge~ (1-\epsilon_{\max})\cdot \sum_{i=1}^{K}\mathcal{U}_i^t\left(\left|f_i^*\right|\right).\label{eqn:PASS-M-T-ratio}\\
	\ee
	If the maximum-delay-based utility maximization~\eqref{eqn:OUDT-obj-2} is the objective, $\bar{f}$ provides the following approximation ratio
	\be~
	\sum_{i=1}^{K}\mathcal{U}_i^d\left(\mathcal{M}\left(\bar{f}_i\right)\right)~\le~ \frac{1}{\epsilon_{\min}}\cdot \sum_{i=1}^{K}\mathcal{U}_i^d\left(\mathcal{M}\left(f_i^*\right)\right),\label{eqn:PASS-M-D-ratio}\\
	\ee
    where $\epsilon_{\min}$ is defined as follows
	\be
	\epsilon_{\min}~=~\min_{1\le i\le K}\left\{\left(\left|\hat{f}_i\right|-\left|\bar{f}_i\right|\right)/\left|\hat{f}_i\right|\right\}.\nnb
	\ee
\end{theorem}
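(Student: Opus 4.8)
The plan is to follow the same three-step template as for Theorem~\ref{thm:sufficient-condition} --- polynomial runtime, feasibility under the relaxed constraints, and the objective bound --- but with the \emph{per-unicast} deletion fractions $\epsilon_i\triangleq(|\hat f_i|-|\bar f_i|)/|\hat f_i|$ taking over the role played by the single parameter $\epsilon$ in \textsf{PASS}; note $\epsilon_{\min}=\min_i\epsilon_i$ and $\epsilon_{\max}=\max_i\epsilon_i$.

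\emph{Runtime and the delay/throughput/feasibility constraints.} By condition~\eqref{ass:convex-opt} the average-delay-aware counterpart (\textsf{MUAT-T} or \textsf{MUAT-M}) is a convex program, hence solvable in polynomial time; a flow decomposition of each $\hat f_i$ yields at most $|E|$ flow-carrying paths, so each inner \textbf{while}-loop iterates at most $|E|$ times and locating the slowest flow-carrying path is cheap. I would next check the loop is well-behaved. The counterpart is feasible ($f^*$ of the feasible \textsf{MUDT} satisfies its constraints, directly in the \textsf{MUAT-T} case via $\mathcal{T}(f_i^*)\le\mathcal{M}(f_i^*)|f_i^*|\le D_i|f_i^*|$, and after the $|f_i^*|=R_i$ reduction noted below in the \textsf{MUAT-M} case), so $|\hat f_i|\ge R_i>0$ and $\mathcal{T}(\hat f_i)\le D_i|\hat f_i|$, i.e. $\mathcal{A}(\hat f_i)\le D_i$; since $\mathcal{A}(\hat f_i)$ is the flow-weighted average of the path delays of $\hat f_i$, some flow-carrying path $q$ has $d^q\le D_i$. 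The loop only ever zeroes the slowest flow-carrying path, whose delay equals $\mathcal{M}(f_i)>D_i$ as long as the loop runs, so $q$ is never removed; hence $|\bar f_i|\ge x^q>0$, so $\epsilon_i<1$ and $\epsilon_{\max}<1$, and since only finitely many flow-carrying paths have delay $>D_i$ the loop terminates with $\mathcal{M}(\bar f_i)\le D_i$, which is~\eqref{eqn:PASS-M-delay}. Deleting path rate preserves flow conservation and capacities, so $\bar f\in\mathcal{X}$, giving~\eqref{eqn:PASS-M-feasible}; and $|\bar f_i|=(1-\epsilon_i)|\hat f_i|\ge(1-\epsilon_i)R_i\ge(1-\epsilon_{\max})R_i$ gives~\eqref{eqn:PASS-M-throughput}.

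\emph{Objective bounds.} The key observation is that Lemma~\ref{lem:framework} applies to \textsf{PASS-M} verbatim with $\epsilon_i$ in place of $\epsilon$, since its proof uses only that each step removes rate from the current slowest flow-carrying path and that the total removed from $\hat f_i$ is $\epsilon_i|\hat f_i|$; thus $\epsilon_i|\hat f_i|\,\mathcal{M}(\bar f_i)\le\mathcal{T}(\hat f_i)$, i.e. $\mathcal{M}(\bar f_i)\le\mathcal{A}(\hat f_i)/\epsilon_i\le\mathcal{A}(\hat f_i)/\epsilon_{\min}$. For objective~\eqref{eqn:OUDT-obj-1}: $f^*$ is feasible for \textsf{MUAT-T} (as above), so optimality of $\hat f$ gives $\sum_i\mathcal{U}_i^t(|\hat f_i|)\ge\sum_i\mathcal{U}_i^t(|f_i^*|)$; concavity with $\mathcal{U}_i^t(0)\ge0$ gives $\mathcal{U}_i^t((1-\epsilon_i)|\hat f_i|)\ge(1-\epsilon_i)\mathcal{U}_i^t(|\hat f_i|)\ge(1-\epsilon_{\max})\mathcal{U}_i^t(|\hat f_i|)$ (last step using $\mathcal{U}_i^t\ge0$), and summing yields~\eqref{eqn:PASS-M-T-ratio}. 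For objective~\eqref{eqn:OUDT-obj-2}: I may assume $|f_i^*|=R_i$ for all $i$ --- otherwise delete rate from the slowest flow-carrying paths of $f^*$ down to throughput $R_i$, which keeps $f^*\in\mathcal{X}$ and does not increase any $\mathcal{M}(f_i^*)$ --- whence $\mathcal{T}(f_i^*)\le\mathcal{M}(f_i^*)R_i\le D_iR_i$ makes $f^*$ feasible for \textsf{MUAT-M}; optimality of $\hat f$ then gives $\sum_i\mathcal{U}_i^d(\mathcal{A}(\hat f_i))\le\sum_i\mathcal{U}_i^d(\mathcal{A}(f_i^*))\le\sum_i\mathcal{U}_i^d(\mathcal{M}(f_i^*))$ (using $\mathcal{A}(f_i^*)\le\mathcal{M}(f_i^*)$ and monotonicity), while condition~\eqref{ass:penalty-convex} with $\sigma=1/\epsilon_{\min}$ gives $\mathcal{U}_i^d(\mathcal{M}(\bar f_i))\le\mathcal{U}_i^d(\mathcal{A}(\hat f_i)/\epsilon_{\min})\le(1/\epsilon_{\min})\mathcal{U}_i^d(\mathcal{A}(\hat f_i))$; summing gives~\eqref{eqn:PASS-M-D-ratio}.

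\emph{Main obstacle.} I do not expect a hard inequality; the work is the bookkeeping around the non-uniform $\epsilon_i$. Two points need care: (a) re-reading the proof of Lemma~\ref{lem:framework} to confirm it nowhere uses that the deleted amount is uniform or fixed in advance --- only the slowest-path rule and the running total; and (b) the degenerate unicasts --- if $\mathcal{M}(\hat f_i)\le D_i$ already then $\epsilon_i=0$, so for such $i$ the step $\mathcal{M}(\bar f_i)\le\mathcal{A}(\hat f_i)/\epsilon_i$ must be replaced by the direct bound $\mathcal{M}(\bar f_i)=\mathcal{M}(\hat f_i)\le D_i$ (and $1/\epsilon_{\min}$ in~\eqref{eqn:PASS-M-D-ratio} is then a vacuous $+\infty$ unless $\epsilon_{\min}>0$ is separately argued), while $\epsilon_{\max}<1$ must be established --- done above --- for~\eqref{eqn:PASS-M-throughput} to be meaningful. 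The only other subtlety is the reduction to $|f_i^*|=R_i$ used to place $f^*$ inside the \textsf{MUAT-M} feasible set.
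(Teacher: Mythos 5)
Your proposal is correct, and for the runtime, the relaxed constraints, and the throughput-utility ratio~\eqref{eqn:PASS-M-T-ratio} it coincides with the paper's argument (which simply re-runs the corresponding steps of Thm.~\ref{thm:sufficient-condition} with the per-unicast $\epsilon_i$ and $\epsilon_{\max}$); your explicit termination argument --- some flow-carrying path $q$ of $\hat f_i$ has $d^q\le D_i$ because $\mathcal{A}(\hat f_i)\le D_i$, and the loop never touches $q$, so $|\bar f_i|>0$ and $\epsilon_{\max}<1$ --- is detail the paper waves off as ``straightforward,'' and it is a welcome addition. Where you genuinely diverge is the maximum-delay ratio~\eqref{eqn:PASS-M-D-ratio}: you re-verify that Lem.~\ref{lem:framework} holds with the variable, per-unicast deletion amount $\epsilon_i|\hat f_i|$, obtaining $\mathcal{M}(\bar f_i)\le\mathcal{A}(\hat f_i)/\epsilon_i\le\mathcal{A}(\hat f_i)/\epsilon_{\min}$, and then redo the chain of inequalities from Thm.~\ref{thm:sufficient-condition} with $\epsilon_{\min}$ in place of $\epsilon$. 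The paper instead couples $\bar f$ to an auxiliary flow $\tilde f$ obtained by running \textsf{PASS} with the uniform parameter $\epsilon=\epsilon_{\min}$, notes $\mathcal{M}(\bar f_i)\le\mathcal{M}(\tilde f_i)$ because $\bar f_i$ deletes at least as much rate from the slowest paths of the same $\hat f_i$, and then invokes Thm.~\ref{thm:sufficient-condition} as a black box for $\tilde f$. Both routes are valid: yours avoids introducing the auxiliary flow and yields the slightly sharper per-unicast bound $\mathcal{M}(\bar f_i)\le\mathcal{A}(\hat f_i)/\epsilon_i$, at the cost of having to confirm the lemma with non-uniform deletions (which, as you note, its proof supports); the paper's coupling argument reuses the already-proved theorem verbatim and never re-opens the lemma. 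Both arguments share the degenerate caveat you flag: if some unicast needs no deletion then $\epsilon_{\min}=0$ and the stated ratio $1/\epsilon_{\min}$ is vacuous (the paper's invocation of Thm.~\ref{thm:sufficient-condition} likewise needs $\epsilon_{\min}\in(0,1)$), a gap in the theorem statement rather than in either proof.
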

\begin{proof}
	Refer to our Appendix~\ref{adx:PASS-M}.
\end{proof}

Comparing Thm.~\ref{thm:sufficient-condition} of \textsf{PASS} with Thm.~\ref{thm:PASS-M} of \textsf{PASS-M}, to solve \textsf{MUDT}, (i) \textsf{PASS} achieves a solution with a constant approximation ratio, at the cost of violating both throughput requirements and maximum delay constraints by constant ratios, while (ii) \textsf{PASS-M} obtains a solution with a problem-dependent approximation ratio, strictly meeting maximum delay constraints, but at the cost of violating throughput requirements by a problem-dependent ratio.
\subsection{Modify \textsf{PASS} to Strictly Meet Throughput Requirements}
In order to strictly meet throughput requirements, our \textsf{PASS-T} suggest to use the optimal solution to the average-delay-aware counterpart of \textsf{MUDT} directly as a solution to the maximum-delay-aware problem \textsf{MUDT}, i.e.,

$\mbox{\ensuremath{\rhd}}$ \textsf{PASS-T}: directly solve the average-delay-aware counterpart of the problem~\eqref{eqn:MUDT}.

\begin{theorem}\label{thm:PASS-T}
	Given a feasible problem~\eqref{eqn:MUDT}, suppose it meets all conditions in Thm.~\ref{thm:sufficient-condition}. We denote $\bar{g}=\{\bar{g}_1,\bar{g}_2,...,\bar{g}_K\}$ as the solution returned if we use \textsf{PASS} (Algorithm~\ref{algorithm:framework}) to solve it with an $\epsilon\in(0,1)$. Now suppose we use \textsf{PASS-T} to solve the problem~\eqref{eqn:MUDT}. Then \textsf{PASS-T} must return a solution $\bar{f}=\{\bar{f}_i,i=1,2,...,K\}$ in a polynomial time, meeting the following relaxed constraints
	\bse
	\bee
	&& \left|\bar{f}_i\right|~\ge~R_i,~~\forall i=1,2,...,K,\label{eqn:PASS-T-throughput} \\
	&& \mathcal{M}\left(\bar{f}_i\right)~\le~ \frac{\lambda}{\epsilon}\cdot D_i,~~\forall i=1,2,...,K,\label{eqn:PASS-T-delay}\\
	&& \bar{f}=\{\bar{f}_1,\bar{f}_2,...,\bar{f}_K\}\in \mathcal{X},\label{eqn:PASS-T-feasible}
	\eee
	\ese
	where $\lambda$ is defined as follows
	\be
	\lambda~=~\max\left\{1,\max_{1\le i\le K}\left\{\mathcal{M}(\bar{f}_i)/\mathcal{M}(\bar{g}_i)\right\}\right\}.\nnb
	\ee
	Suppose $f^*=\{f_i^*,i=1,2,...,K\}$ is the optimal solution to problem~\eqref{eqn:MUDT}. If the throughput-based utility maximization~\eqref{eqn:OUDT-obj-1} is the objective, $\bar{f}$ provides the following approximation ratio
	\be
	\sum_{i=1}^{K}\mathcal{U}_i^t\left(\left|\bar{f}_i\right|\right)~\ge~ \sum_{i=1}^{K}\mathcal{U}_i^t\left(\left|f_i^*\right|\right).\label{eqn:PASS-T-T-ratio}\\
	\ee
	If the maximum-delay-based utility maximization~\eqref{eqn:OUDT-obj-2} is the objective, $\bar{f}$ provides the following approximation ratio
	\be
	\sum_{i=1}^{K}\mathcal{U}_i^d\left(\mathcal{M}\left(\bar{f}_i\right)\right)~\le~ \frac{\lambda}{\epsilon}\cdot \sum_{i=1}^{K}\mathcal{U}_i^d\left(\mathcal{M}\left(f_i^*\right)\right).\label{eqn:PASS-T-D-ratio}\\
	\ee
\end{theorem}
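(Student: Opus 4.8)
The plan is to exploit the fact that \textsf{PASS-T} simply outputs $\hat f$, the optimal solution of the average-delay-aware counterpart solved in line~\ref{line:original-solution}, so that $\bar f=\hat f$, while the $\bar g$ in the statement is exactly what \textsf{PASS} produces by post-processing the \emph{same} $\hat f$ (deleting an $\epsilon$-fraction from the slowest flow-carrying paths). The ``relaxed feasibility'' claims then split cleanly: $\bar f\in\mathcal X$ and $|\bar f_i|\ge R_i$ (indeed $|\bar f_i|=R_i$ in the \textsf{MUAT-M} case) hold because $\hat f$ is feasible for \textsf{MUAT-T}/\textsf{MUAT-M}, whose constraint sets already contain \eqref{eqn:OUAT-T-throughput}/\eqref{eqn:OUAT-D-throughput} and \eqref{eqn:OUAT-T-feasible}/\eqref{eqn:OUAT-D-feasible}. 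For the delay bound I would use $\bar g$ as a bridge: Thm~\ref{thm:sufficient-condition} applied to \textsf{PASS} with the same $\epsilon$ gives $\mathcal M(\bar g_i)\le D_i/\epsilon$, and the definition of $\lambda$ gives $\mathcal M(\bar f_i)\le\lambda\,\mathcal M(\bar g_i)$ for every $i$; chaining yields \eqref{eqn:PASS-T-delay}. Polynomial running time is immediate from condition~\eqref{ass:convex-opt}, which makes \textsf{MUAT-T}/\textsf{MUAT-M} convex programs solvable in polynomial time.

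For the throughput-utility ratio \eqref{eqn:PASS-T-T-ratio}, I would show the optimum $f^*$ of \textsf{MUDT} is feasible for \textsf{MUAT-T}: $|f_i^*|\ge R_i$ and $f^*\in\mathcal X$ are inherited from \textsf{MUDT}, and $\mathcal T(f_i^*)=\sum_{p\in P_i}x^p d^p\le \mathcal M(f_i^*)\cdot|f_i^*|\le D_i|f_i^*|$ gives \eqref{eqn:OUAT-T-delay}. Since \textsf{MUAT-T} has the very objective \eqref{eqn:OUDT-obj-1} and $\bar f=\hat f$ is its optimum, $\sum_i\mathcal U_i^t(|\bar f_i|)\ge\sum_i\mathcal U_i^t(|f_i^*|)$. (In fact this shows \textsf{PASS-T} attains utility at least the optimum of \textsf{MUDT}, since \textsf{MUAT-T} is a relaxation; the price is paid only in the delay constraint.)

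For the maximum-delay-utility ratio \eqref{eqn:PASS-T-D-ratio} I would proceed in two steps. Step (a): compare the \textsf{MUAT-M} optimum with $\mathrm{OPT}$ of \textsf{MUDT}. Starting from $f^*$, delete flow from the slowest flow-carrying paths of each $f_i^*$ until the resulting $f_i^{**}$ has $|f_i^{**}|=R_i$ (possible since $|f_i^*|\ge R_i$); deletion cannot raise link loads or the maximum delay, so $f^{**}\in\mathcal X$, $\mathcal M(f_i^{**})\le\mathcal M(f_i^*)\le D_i$, and $\mathcal T(f_i^{**})\le\mathcal M(f_i^{**})R_i\le D_iR_i$, i.e.\ $f^{**}$ is feasible for \textsf{MUAT-M}. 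Because $\mathcal A(f_i^{**})\le\mathcal M(f_i^{**})\le\mathcal M(f_i^*)$ and $\mathcal U_i^d$ is non-decreasing (condition~\eqref{ass:convex-opt}), optimality of $\hat f$ for \textsf{MUAT-M} yields $\sum_i\mathcal U_i^d(\mathcal A(\hat f_i))\le\sum_i\mathcal U_i^d(\mathcal M(f_i^*))$. Step (b): bound $\mathcal M(\bar f_i)$ by $\mathcal A(\hat f_i)$. Lem.~\ref{lem:framework} applied to \textsf{PASS} (whose returned solution is $\bar g$ and whose internal average-delay solution is $\hat f$) gives $\epsilon|\hat f_i|\,\mathcal M(\bar g_i)\le\mathcal T(\hat f_i)$, i.e.\ $\mathcal M(\bar g_i)\le\mathcal A(\hat f_i)/\epsilon$, so $\mathcal M(\bar f_i)\le\lambda\,\mathcal M(\bar g_i)\le(\lambda/\epsilon)\,\mathcal A(\hat f_i)$. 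Since $\lambda/\epsilon\ge 1$ and $\mathcal U_i^d$ is non-decreasing and satisfies condition~\eqref{ass:penalty-convex}, $\mathcal U_i^d(\mathcal M(\bar f_i))\le\mathcal U_i^d\!\big((\lambda/\epsilon)\,\mathcal A(\hat f_i)\big)\le(\lambda/\epsilon)\,\mathcal U_i^d(\mathcal A(\hat f_i))$; summing over $i$ and substituting (a) gives \eqref{eqn:PASS-T-D-ratio}.

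The main obstacle, and the part worth writing out carefully, is step (b) together with the degenerate cases: the inequality $\mathcal M(\bar f_i)\le\lambda\,\mathcal M(\bar g_i)$ is clean when $\mathcal M(\bar g_i)>0$, but if \textsf{PASS} happens to delete all positive-delay flow-carrying paths of $\hat f_i$ then $\mathcal M(\bar g_i)=0$ while $\mathcal M(\bar f_i)$ may be positive, in which case $\lambda$ must be read as $+\infty$ and \eqref{eqn:PASS-T-delay}--\eqref{eqn:PASS-T-D-ratio} hold vacuously; I would state this convention explicitly. I would also double-check that the reduction $f^*\!\to\!f^{**}$ used to meet the equality constraint $|f_i|=R_i$ of \textsf{MUAT-M} indeed preserves membership in $\mathcal X$ (removing path flow only lowers link loads and keeps conservation), and that the $\bar g$ appearing in the theorem is obtained from the identical $\hat f$ that \textsf{PASS-T} returns, so that $\bar f_i=\hat f_i$ is used consistently throughout.
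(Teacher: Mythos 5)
Your proposal is correct and follows essentially the same route as the paper: feasibility, polynomial time and the throughput-utility ratio come from the argument of Thm.~\ref{thm:sufficient-condition} (with $f^*$, resp.\ a rate-reduced $f^{**}$, feasible for the average-delay counterpart), and the delay guarantees come from chaining $\mathcal{M}(\bar f_i)\le\lambda\,\mathcal{M}(\bar g_i)$ with \textsf{PASS}'s bounds and condition~\ref{ass:penalty-convex}. The only differences are cosmetic — you inline the \textsf{PASS} delay-ratio chain through $\mathcal{A}(\hat f_i)$ and apply condition~\ref{ass:penalty-convex} once with $\sigma=\lambda/\epsilon$ instead of citing~\eqref{eqn:use-2} and using $\sigma=\lambda$, and you additionally flag the degenerate case $\mathcal{M}(\bar g_i)=0$ that the paper leaves implicit.
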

\begin{proof}
	Refer to our Appendix~\ref{adx:PASS-T}.
\end{proof}

Thm.~\ref{thm:PASS-T} suggests that we can figure out an approximation ratio of \textsf{PASS-T} with the knowledge of an arbitrary solution of \textsf{PASS}. Comparing Thm.~\ref{thm:sufficient-condition} of \textsf{PASS} with Thm.~\ref{thm:PASS-T} of \textsf{PASS-T}, in order to solve \textsf{MUDT}, (i) \textsf{PASS} achieves a solution with a constant approximation ratio, at the cost of violating both throughput requirements and maximum delay constraints by constant ratios, while (ii) \textsf{PASS-T} obtains a solution with a problem-dependent approximation ratio, strictly meeting throughput requirements, but at the cost of violating maximum delay constraints by a problem-dependent ratio.

\subsection{Our Proposed Algorithms Can Solve Other Maximum-Delay-Aware Problems}\label{subsec:other-problem}
As shown in problem~\eqref{eqn:MUDT}, \textsf{MUDT} has an objective of either~\eqref{eqn:OUDT-obj-1} or ~\eqref{eqn:OUDT-obj-2}, both of which maximize aggregate user utilities. Differently, another two representative user-utility-sensitive objectives are
\bse
\bee
&&\max \min_{1\le i\le K}\left\{\mathcal{U}_i^t(|f_i|)\right\},\label{eqn:new-obj-1}\\
&&\max \min_{1\le i\le K}\left\{-\mathcal{U}_i^d(\mathcal{M}(f_i))\right\},\label{eqn:new-obj-2} 
\eee
\ese
both of which maximize worst user utilities. Following same proof to Thm.~\ref{thm:sufficient-condition}, Thm.~\ref{thm:PASS-M}, and Thm.~\ref{thm:PASS-T}, it is easy to verify that as long as the conditions in Thm.~\ref{thm:sufficient-condition} are satisfied, we can use \textsf{PASS}, \textsf{PASS-M}, and \textsf{PASS-T} to solve the problem with an objective of either~\eqref{eqn:new-obj-1} or~\eqref{eqn:new-obj-2}, subject to throughput requirements~\eqref{eqn:OUDT-throughput}, maximum delay constraints~\eqref{eqn:OUDT-delay}, and feasibility constraints~\eqref{eqn:OUDT-feasible}, approximately in a polynomial time. Our design of \textsf{PASS} suggests a new avenue for solving maximum-delay-aware network optimization problems. 

Overall in this section, we design \textsf{PASS} to solve the maximum-delay-aware problem \textsf{MUDT} approximately in a polynomial time under practical conditions. \textsf{PASS} solves the average-delay-aware counterpart of \textsf{MUDT} only once in the input network, and then deletes certain flow rate from slowest flow-carrying paths to obtain solutions with theoretical performance guarantee. Note again that in sharp contrast, existing maximum-delay-aware problems either minimize throughput-constrained maximum delay or maximize maximum-delay-constrained throughput, which are special cases of our problem \textsf{MUDT}. They rely on a time-consuming technique of solving problems iteratively in the time-expanded network to provide approximate solutions. Our \textsf{PASS} leverages a novel understanding between non-convex maximum-delay-aware problems and their convex average-delay-aware counterparts, which can be of independent interest and suggest a new avenue for solving maximum-delay-aware network optimization problems.
\section{Popular Delay-/Throughput- Aware Network Communication Scenarios}\label{sec:example}
In this section we introduce several popular network communication settings that are sensitive both to the throughputs and to the maximum delays. Although associated problems are all NP-hard, we observe that they are all special cases of \textsf{MUDT}, and all satisfy conditions introduced in Thm.~\ref{thm:sufficient-condition}, and hence can be solved by \textsf{PASS}, \textsf{PASS-M}, and \textsf{PASS-T} approximately with strong theoretical performance guarantee in a polynomial time.
\subsection{Throughput-Constrained Maximum Delay Minimization}
The Throughput-Constrained maximum Delay Minimization problem (\textsf{TCDM}) aims to find a network flow to minimize the weighted summation of maximum delays of all users, subject to link capacity constraints and throughput requirements.
\bse
\label{eqn:TCTM}
\bee
(\textsf{TCDM}): \min && \sum_{i=1}^{K} \left(w_i\cdot \mathcal{M}(f_i)\right) \label{eqn:TCTMobj}
\\ \mbox{s.t. } && |f_i| \ge R_i, ~~\forall i=1,2,...,K,\label{eqn:TCTMrate}\\
&& f=\{f_1,f_2,...,f_K\}\in \mathcal{X},\label{eqn:TCTMfeasible}
\eee
\ese
where in the objective~\eqref{eqn:TCTMobj} a non-negative weight $w_i\ge 0$ is associated with the maximum delay of $f_i$ for each $i=1,2,...,K$.

\textsf{TCDM} is NP-hard, since as its special case when $K=1$, the single-unicast maximum delay minimization problem is known to be NP-hard~\cite{misra2009polynomial}. Maximum delay minimization problems similar to \textsf{TCDM} have been studied in~\cite{misra2009polynomial,zhang2010reliable,correa2004computational,correa2007fast,my}. It is clear that \textsf{TCDM} satisfies our conditions introduced in Thm.~\ref{thm:sufficient-condition}. Therefore, by replacing the non-convex maximum delays with the convex average delays, we can get the average-delay-aware counterpart formulated in the way of problem~\eqref{eqn:OUAT-D}, and thus can either (i) use \textsf{PASS} to solve \textsf{TCDM} with a constant approximation ratio while violating throughput requirements also by a constant ratio (see Thm.~\ref{thm:sufficient-condition}), or (ii) use \textsf{PASS-T} to solve \textsf{TCDM} with a problem-dependent approximation ratio, strictly meeting throughput requirements (see Thm.~\ref{thm:PASS-T}). 

  
\subsection{Maximum-Delay-Constrained Throughput-Based Utility Maximization}\label{subsec:theoreical-DCUM}
The maximum-Delay-Constrained throughput-based Utility Maximization (\textsf{DCUM}) problem aims to find a network flow to maximize aggregate user utilities, subject to link capacity constraints and maximum delay constraints. It has the following formulation.
\bse
\label{eqn:1Problem}
\bee
(\textsf{DCUM}): \max && \sum_{i=1}^{K} \mathcal{U}_i^t\left(|f_i|\right) \label{eqn:1obj}
\\ \mbox{s.t. }
&& \mathcal{M}(f_i)\le D_i,~~\forall i=1,2,...,K,\label{eqn:1delay}\\
&& f=\{f_1,f_2,...,f_K\}\in \mathcal{X}.
\eee
\ese

\textsf{DCUM} is NP-hard, because as its special case when $K=1$ and $\mathcal{U}_1^t|f_1|=|f_1|$, the problem can be proved to be NP-hard following a similar proof as introduced in the Appendix of~\cite{misra2009polynomial}. Throughput-based utility maximization problems similar to \textsf{DCUM} have been studied in~\cite{cao2017optimizing,yu2018application}. Due to practical concerns, it is fair to assume that the throughput-based utility function of each user is concave, non-decreasing, and non-negative with the achieved throughput, thus meeting conditions introduced in our Thm.~\ref{thm:sufficient-condition}. After replacing the non-convex maximum delays with the convex average delays, we can get the average-delay-aware counterpart formulated in the way of problem~\eqref{eqn:OUAT-T}, and thus can either (i) use \textsf{PASS} to solve \textsf{DCUM} with a constant approximation ratio while violating maximum delay constraints also by a constant ratio (see Thm.~\ref{thm:sufficient-condition}), or (ii) use \textsf{PASS-M} to solve \textsf{DCUM} with a problem-dependent approximation ratio, strictly meeting maximum delay constraints (see Thm.~\ref{thm:PASS-M}). 

\section{Performance Evaluation}\label{sec:experiments}
\begin{figure}[]
	\centering
	\includegraphics[width=0.9\columnwidth]{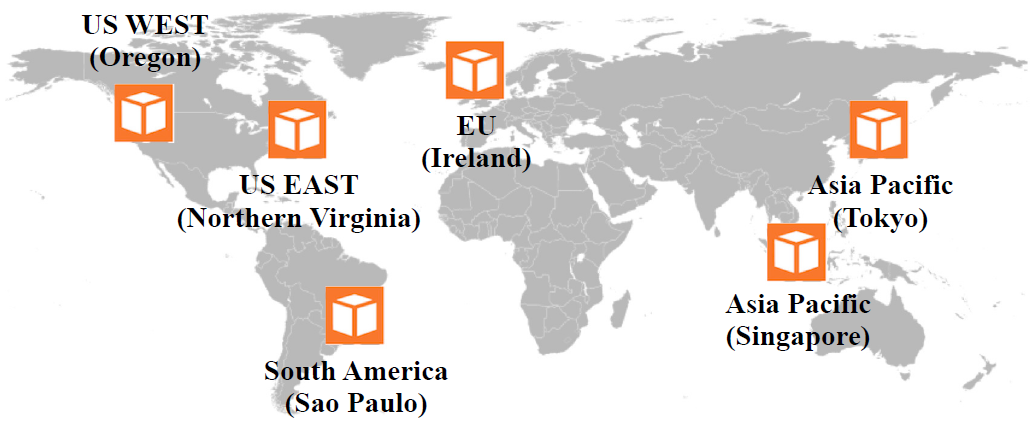}
	\caption{Topology of the 6 Amazon EC2 datacenters~\cite{liu2016delay}.}
	\label{fig:Amazon}
\end{figure}

We evaluate the empirical performance of our proposed algorithms, by simulating the delay-critical video conferencing traffic over a real-world continent-scale inter-datacenter network topology of 6 globally distributed Amazon EC2 datacenters (see Fig.~\ref{fig:Amazon}). The network is modeled as a complete undirected graph. Each undirected link is treated as two directed links that operate independently and have identical delays and capacities, a common way to model an undirected graph by a directed one, e.g. in~\cite{grimmer2016nash}. We set link delays and capacities according to practical evaluations on Amazon EC2 from~\cite{hajiesmaili2017cost,liu2016delay} (see Tab.~\ref{tab:delay}). We assume two unicasts, namely $K=2$, with $s_1$ to be Virginia, $t_1$ to be Singapore, $s_2$ to be Oregon, and $t_2$ to be Tokyo. Our test environment is an Intel Core i5 (2.40 GHz) processor with 8 GB memory running Windows 64-bit operating system. All the experiments are implemented in C++ and linear programs are solved using \emph{CPLEX}~\cite{cplex}.
\begin{table}[]
	\centering
	\caption{Information of $(d_e,c_e)$ for each link $e\in E$ in the Amazon EC2 network~\cite{hajiesmaili2017cost,liu2016delay}, where $d_e$ is link delay (in ms) and $c_e$ is link capacity (in Mbps), (OR: Oregon, VA: Virginia, IR: Ireland, TO: Tokyo, SI: Singapore, SP: Sao Paulo).}
	\label{tab:delay}
	\scalebox{0.9}{\begin{tabular}{|c *{6}{|c}|}
			\hline
			& OR & VA & IR & TO & SI & SP \\ \hline
			OR & N/A & (41,82) & (86,86) & (68,138) & (117,74) & (104,67) \\ \hline
			VA & - & N/A & (54,72) & (101,41) & (127,52) & (82,70) \\ \hline
			IR & - & - & N/A & (138,56) & (117,44) & (120,61) \\ \hline
			TO & - & - & - & N/A & (45,166) & (151,41) \\ \hline
			SI & - & - & - & - & N/A & (182,33) \\ \hline
			SP & - & - & - & - & - & N/A \\ \hline		
		\end{tabular}}
	\end{table}
\subsection{Use \textsf{PASS} to Minimize Maximum Delay}\label{subsec:TCDM}
\begin{figure}[!t]
	\subfigure[Delay results with $\epsilon$ of \textsf{PASS}, with $R_1=R_2=230$. ]{\includegraphics[width=1.49in]{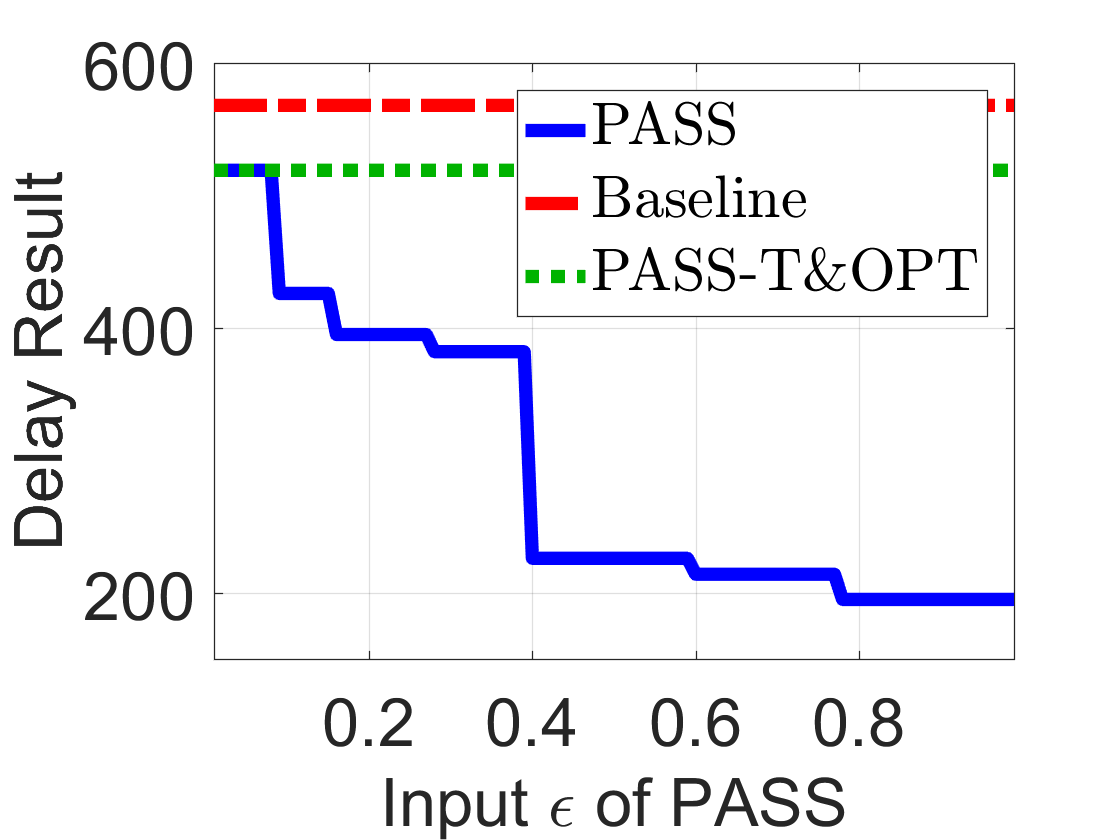}
		\label{subfig:TCDM-epsilon}}
	\subfigure[Delay results with throughput requirements, with $\epsilon=3\%$ in \textsf{PASS}.]{\includegraphics[width=1.49in]{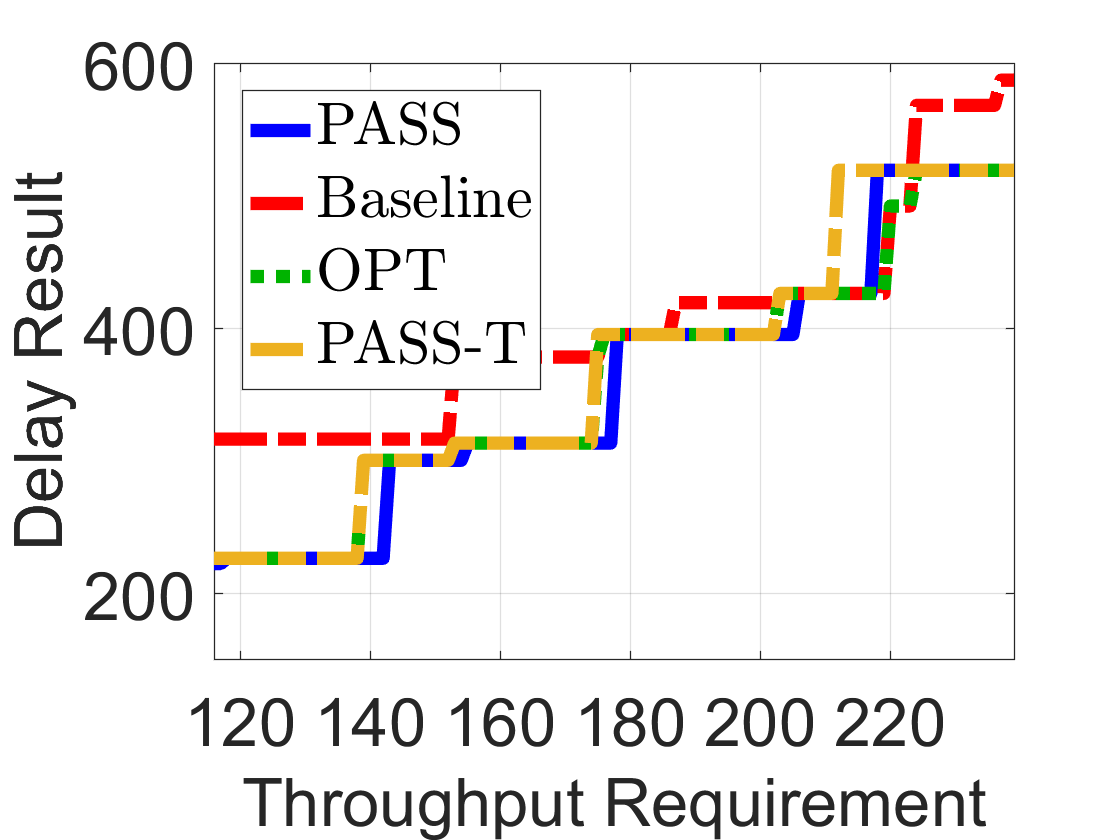}
		\label{subfig:TCDM-rate}}
	\caption{Simulation results of using \textsf{PASS} to minimize the summation of maximum delays.}
	\label{fig:TCDM}
\end{figure}
We now use \textsf{PASS} to minimize the maximum delay, subject to link capacity constraints and throughput requirements (i.e., to solve \textsf{TCDM} with formula~\eqref{eqn:TCTM}). We assume $w_1=w_2$ and $R_1=R_2=R$ in the formula~\eqref{eqn:TCTM}.

We compare \textsf{PASS} with the optimal solution, a conceivable greedy baseline, and \textsf{PASS-T} respectively. (i) Because link delays are all integers (see Tab.~\ref{tab:delay}), the delay of any path must be an integer. Therefore, we can obtain the optimal solution minimizing the summation of maximum delays, by enumerating all possible maximum delays of individual unicasts to figure out the minimal performance such that a feasible flow exists in the time-expanded network. Note that this approach theoretically has an exponential time complexity, and is the foundation of the \textsf{FPTAS}~\cite{misra2009polynomial} designed for the single-unicast maximum delay minimization problem. (ii) In order to minimize delay while satisfying throughput requirements, the baseline greedily obtains the routing solution from the unicast $1$ to the unicast $K$ one by one. In the iteration of the unicast $i$, it assigns as much rate as possible to the shortest paths from $s_i$ to $t_i$ iteratively respecting the link capacity constraints, till the throughput requirement $R_i$ is satisfied. Similar heuristic approaches have been used in other delay-aware network flow studies, e.g., in~\cite{devetak2011minimizing}. 

First, we evaluate the summation of maximum delays of \textsf{PASS} with $\epsilon$ (see Fig.~\ref{subfig:TCDM-epsilon}). We set $R=230$ and vary $\epsilon$ from $1\%$ to $99\%$ by a step of $1\%$. According to the figure, (i) \textsf{PASS-T} obtains the optimal solution to our problem, (ii) the delay of the baseline is strictly larger than optimal, and (iii) the delay of \textsf{PASS} is a staircase function with $\epsilon$. We remark that the delay of \textsf{PASS} can be smaller than optimal in many instances because \textsf{PASS} can only support $(1-\epsilon)$-fraction of the throughput requirement, while the optimal solution achieves the minimal summation of maximum delays among network flows supporting the full throughput requirement.

Second, we evaluate the summation of maximum delays of \textsf{PASS} with the throughput requirement $R$ (see Fig.~\ref{subfig:TCDM-rate}). We set $\epsilon=3\%$ since a $3\%$ throughput loss is very acceptable for video conferencing with protection/recovery capabilities~\cite{weinstein2008polycom}. We vary $R$ from $116$ to $239$ with a unit step. We remark that $116$Mbps is the smallest throughput when the baseline needs multiple paths to forward it for each of the two unicasts, and $239$Mbps is the largest throughput that can be routed. From Fig.~\ref{subfig:TCDM-rate}, it is clear that \textsf{PASS} outputs a smaller maximum delay compared with the baseline in most instances. In average, the maximum delay of the baseline ($402$) is over $11\%$ more than that of the optimal ($362$) and of the \textsf{PASS} ($359$). In the worst case ($R\in[116,138]$), the maximum delay of the baseline is over $40\%$ more than that of the optimal and of the \textsf{PASS}. In addition, \textsf{PASS-T} obtains the optimal solution to our problem in most instances, except for instances where $R\in[212,223]$. 
\subsection{Use \textsf{PASS} to Maximize Throughput}\label{subsec:DCUM}
\begin{figure}[!t]
	\subfigure[Throughput results (both baseline and \textsf{PASS-M} obtain the optimal). ]{\includegraphics[width=1.49in]{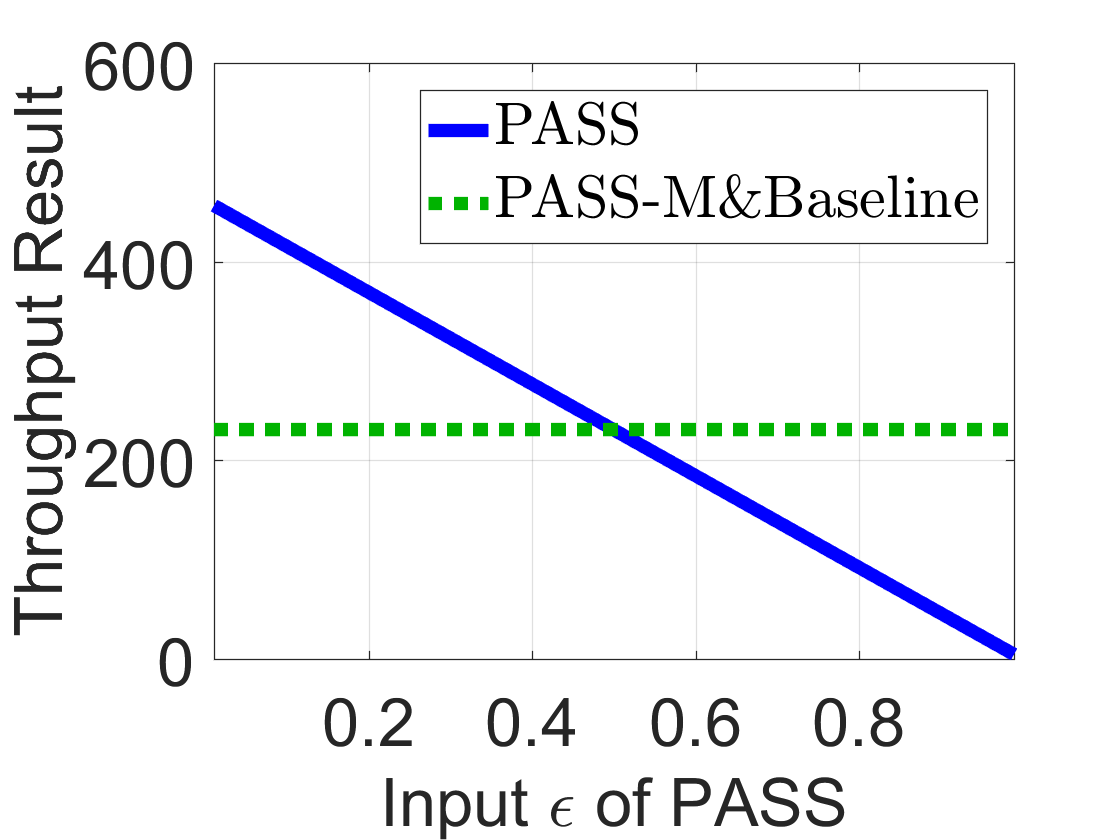}
		\label{subfig:Rate-DCUM}}
	\subfigure[Delay ratio comparing the achieved result to the constraint.]{\includegraphics[width=1.49in]{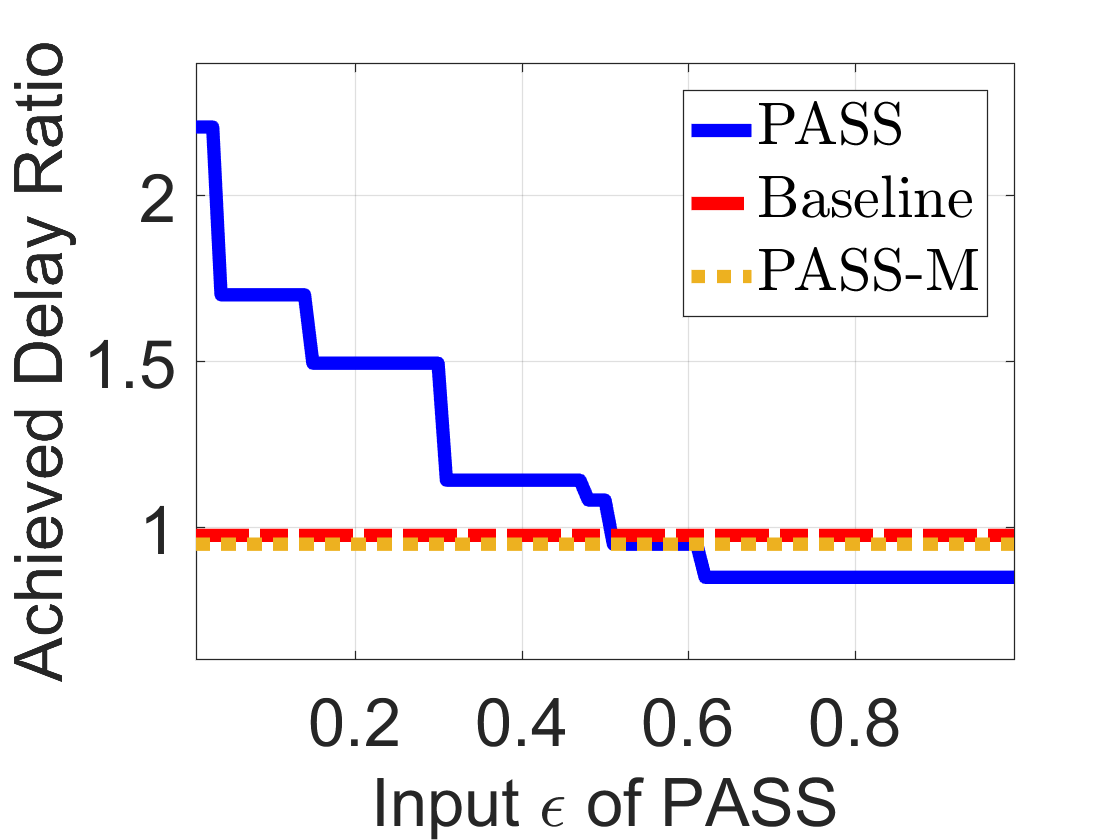}
		\label{subfig:Delay-DCUM}}
	\caption{Simulation results of using \textsf{PASS} to maximize total throughput with various $\epsilon$, where $D_1=D_2=150$.}
	\label{fig:DCUM}
\end{figure}
We then use \textsf{PASS} to maximize the throughput, subject to link capacity constraints and maximum delay constraints (i.e., to solve \textsf{DCUM} with formula~\eqref{eqn:1Problem}). We assume $\mathcal{U}_1^t(|f_1|)=|f_1|$, $\mathcal{U}_2^t(|f_2|)=|f_2|$, and $D_1=D_2=D$ in the formula~\eqref{eqn:1Problem}. We compare \textsf{PASS} with the optimal solution, a conceivable baseline, and \textsf{PASS-M}, respectively. Similar to the greedy approach introduced in Sec.~\ref{subsec:TCDM}, the baseline assigns as much rate as possible to the shortest paths respecting both link capacity constraints and maximum delay constraints iteratively from the unicast $1$ to the unicast $K$ one by one. Besides, similar to Sec.~\ref{subsec:TCDM}, we can obtain the optimal solution maximizing throughput by solving multiple-unicast flow problems in the time-expanded network. 

We set $D=150$ due to the following two concerns. (i) An end-to-end delay less than $150$ms can provide a transparent interactivity for video conferencing~\cite{ITU}. (ii) A delay larger than $150$ms (as long as it is less than $400$ms) is still acceptable for video conferencing~\cite{ITU}, and hence a solution that violates the maximum delay constraint (e.g., the solution of \textsf{PASS}) may still be useful if it can achieve a huge amount of throughput increment.

We vary $\epsilon$ from $1\%$ to $99\%$ with a step of $1\%$. We give the throughput results in Fig.~\ref{subfig:Rate-DCUM}, and the achieved maximum delay ratio results, i.e., $\max\{\mathcal{M}(f_1),\mathcal{M}(f_2)\}/D$ where $f$ is the solution, in Fig.~\ref{subfig:Delay-DCUM}. In our simulations, both the baseline and \textsf{PASS-M} obtain the optimal throughput strictly meeting maximum delay constraints. For $\epsilon\le 49\%$, the throughput of \textsf{PASS} is strictly larger than the optimal, while violating maximum delay constraints (e.g., $8\%$ more than $D$ when $\epsilon=49\%$). For $\epsilon\ge 51\%$, the solution of \textsf{PASS} meets maximum delay constraints, but the achieved throughput is strictly smaller than optimal. It is impressive that with a small $\epsilon$, e.g., $\epsilon=1\%$, the throughput of \textsf{PASS} is over $90\%$ more than optimal, while in the same time the maximum delays of \textsf{PASS} are less than $331$ms which is still acceptable for video conferencing. In average, we observe a $2.0\%$ throughput increment as compared to optimal, but with a $2.2\%$ violation with the maximum delay constraints, when $\epsilon$ is decreased by $1\%$ for instances where $\epsilon\le 49\%$.   

\subsection{Use \textsf{PASS} to Maximize Network Utility}
\begin{figure}[!t]
	\subfigure[Network utility results of different algorithms, with $\epsilon=3\%$ in \textsf{PASS}. ]{\includegraphics[width=1.49in]{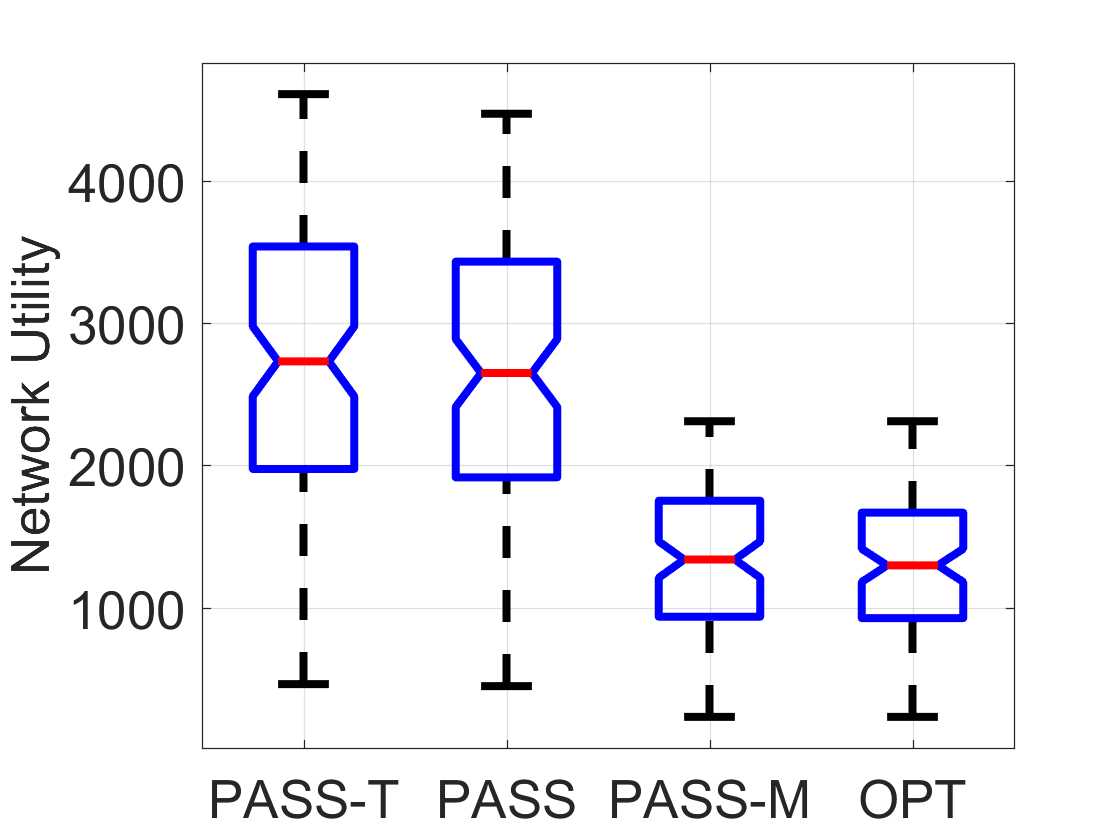}
		\label{subfig:utility-rate}}
	\subfigure[Network utility increment compared to optimal, with $\epsilon=3\%$ in \textsf{PASS}.]{\includegraphics[width=1.49in]{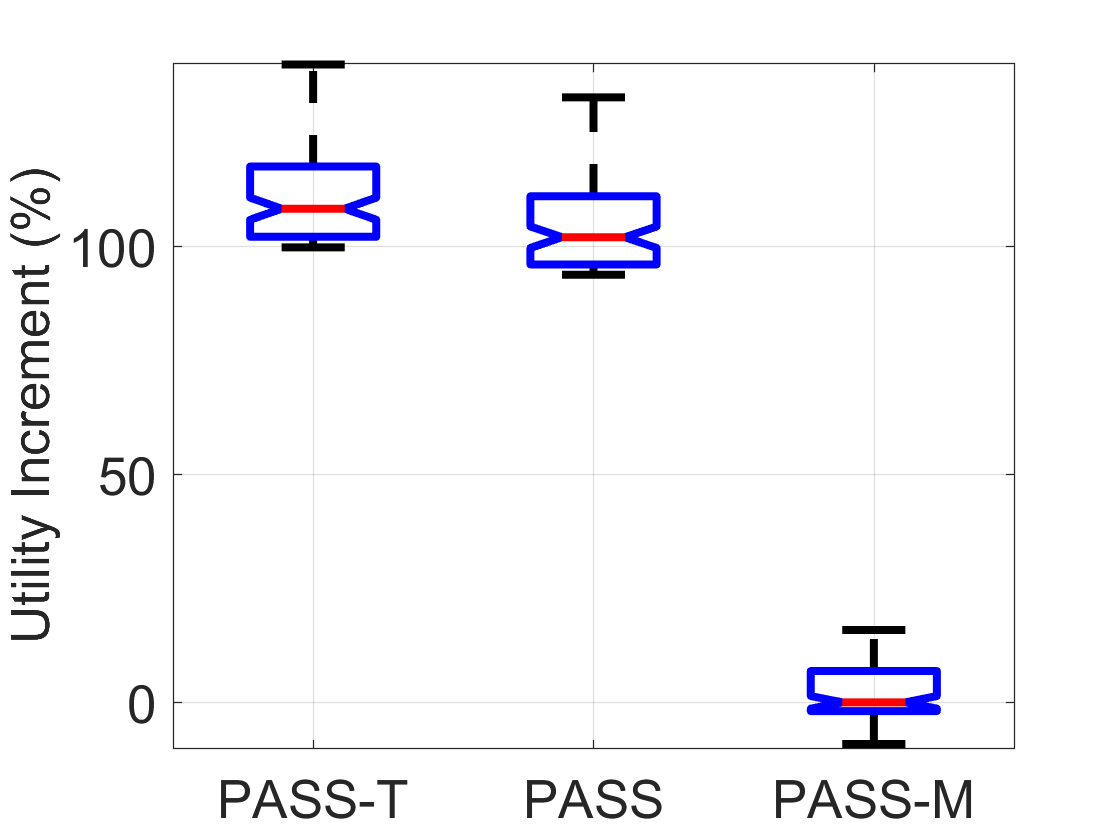}
		\label{subfig:utility-ratio}}
	\caption{Simulation results of using \textsf{PASS} to maximize network utility, with $R_1=R_2=80$ and $D_1=D_2=150$.}
	\label{fig:utility}
\end{figure}
Finally we use \textsf{PASS} to maximize aggregate user utilities, subject to link capacity constraints, maximum delay constraints, and throughput requirements (i.e., to solve \textsf{MUDT} with formula~\eqref{eqn:MUDT}). We assume the objective is~\eqref{eqn:OUDT-obj-1} where $\mathcal{U}_i^t(|f_i|)=w_i\cdot|f_i|,i=1,2$. And we assume $R_1=R_2=80$, and $D_1=D_2=150$ in the formula~\eqref{eqn:MUDT}.

We vary the weight $w_1$ (resp. $w_2$) from $1$ to $10$ with a step of $1$, thus leading to $100$ simulation instances each of which is characterized by a specific $\langle w_1,w_2\rangle,1\le w_1\le 10,1\le w_2\le 10$. For each instance, we respectively run \textsf{PASS}, \textsf{PASS-M}, \textsf{PASS-T}, and compare their solutions with the optimal. Note that we obtain the optimal solution by solving multiple-unicast flow problems in the time-expanded network, similar to Sec.~\ref{subsec:TCDM} and~\ref{subsec:DCUM}.

We present the achieved network utilities of different algorithms of the $100$ simulation instances in Fig.~\ref{subfig:utility-rate}. And in Fig.~\ref{subfig:utility-ratio}, we give the utility increment ($\%$) of our designed algorithms as compared to the optimal utility. Note that \textsf{PASS}, \textsf{PASS-M}, and \textsf{PASS-T} can obtain utilities that is strictly greater than optimal, because all of the three algorithms optimize utility subject to relaxed constraints, while the optimal utility is achieved by a feasible solution strictly meeting all the constraints.

From Fig.~\ref{fig:utility} we learn that \textsf{PASS} and \textsf{PASS-T} obtain a huge utility improvement compared to optimal (over $100\%$ more than optimal), while the utility achieved by \textsf{PASS-M} is close-to-optimal. According to Thm.~\ref{thm:sufficient-condition}, theoretically \textsf{PASS} can violate both throughput requirements and maximum delay constraints. Empirically, (i) the throughput achieved by \textsf{PASS} is $138$ (resp. $302$) in average for the first unicast (resp. second unicast), both satisfying throughput requirements $R_1=R_2=80$. (ii) The maximum delay experienced by \textsf{PASS} is $195$ (resp. $301$) in average for the first unicast (resp. second unicast), both violating maximum delay constraints $D_1=D_2=150$. But considering that video conferencing applications can accept a delay less than $400$ms~\cite{ITU}, the solution of \textsf{PASS} is acceptable. According to Thm.~\ref{thm:PASS-M}, theoretically \textsf{PASS-M} can meet maximum delay constraints while violate throughput requirements. Empirically, the throughput achieved by \textsf{PASS-M} is $71$ (resp. $154$) in average for the first unicast (resp. second unicast). It is clear that the first unicast flow violates throughput requirement. According to Thm.~\ref{thm:PASS-T}, theoretically \textsf{PASS-T} can meet throughput requirements while violate maximum delay constraints. Empirically, the maximum delay experienced by \textsf{PASS-T} is $222$ (resp. $322$) in average for the first unicast (resp. second unicast), both violating maximum delay constraints but within $400$ms that is the largest acceptable delay.     
\section{Conclusion}\label{sec:conclusion}
We consider the problem of maximizing aggregate user utilities subject to link capacity constraints, maximum delay constraints, and throughput requirements. A user's utility is a concave function of the achieved throughput or the experienced maximum delay. The problem is uniquely challenging due to the need of jointly considering maximum delay constraints and throughput requirements. We first prove that it is NP-complete either (i) to construct a feasible solution meeting all constraints, or (ii) to obtain an optimal solution after we relax maximum delay constraints or throughput requirements up to constant ratios. We then design the first polynomial-time approximation algorithm named \textsf{PASS} to obtain solutions that (i) achieve constant or problem-dependent approximation ratios, at the cost of (ii) violating maximum delay constraints or throughput requirements up to constant or problem-dependent ratios, under realistic conditions. \textsf{PASS} is practically useful since our conditions are satisfied in many popular application settings. We evaluate \textsf{PASS} empirically using extensive simulations of routing delay-critical video-conferencing traffic over Amazon EC2 datacenters. Our design leverage a new understanding between maximum-delay-aware problems and their average-delay-aware counterparts, which can be of independent interest and suggest a new avenue for solving maximum-delay-aware network optimization problems.


\bibliographystyle{ACM-Reference-Format}
\bibliography{References}
\section{Appendix}\label{adx:lem-frame}
\subsection{Proof to Lem.~\ref{lem:framework}}\label{adx:lem-frame}
\begin{proof}
	According to Algorithm~\ref{algorithm:framework}, for any $i=1,2,...,K$, $\bar{f}_i$ is obtained by iteratively deleting $\epsilon\cdot |\hat{f}_i|$ rate from $\hat{f}_i$.
	Suppose that there are in total $N_i$ iterations to get $\bar{f}_i$ by deleting rate from $\hat{f}_i$ (namely assume $N_i$ to be the number of iterations of the while-loop of line~\ref{line:while}).
	And we use $f_i^n$ to represent the flow of the unicast $i$ at the beginning of the $n$-th iteration
	(or equivalently, at the end of the $(n-1)$-th iteration). Obviously,
	$f_i^1=\hat{f}_i$,  $f_i^{N_i+1}=\bar{f}_i$. We denote $P_i^n$ as the set of of all \emph{flow-carrying} paths in flow $f_i^n$,
	and $p_i^n\in P_i^n$ as the slowest flow-carrying path in $P_i^n$. In the $n$-th iteration of the unicast $i$, \textsf{PASS} delete some rate, say $x_i^n > 0$, from $p_i^n$.
	
	Since all link delays are non-negative constants, the path delay cannot increase with reduced flow rate. Thus,
	\be
	\mathcal{M}\left(f_i^{n+1}\right) \le \mathcal{M}\left(f_i^n\right),~~\forall n=1,2,...,N_i,\forall i=1,2,...,K.
	\label{equ:proof-of-non-decreasing-of-max-delay}
	\ee
	
	Considering the total delay of the unicast $i$, for any $1\le n\le N_i$, we have the following held for any $i=1,2,...,K$
	\begin{equation}\label{equ:proof-of-total-delay-relationship}
	\begin{aligned}
	& \mathcal{T}\left(f_i^n\right) =  \sum_{e\in E:e\not\in p_i^n}\left[x_i^ed_e\right] + \sum_{e\in E:e\in p_i^n}\left[x_i^ed_e\right]  \\&
	= \sum_{e\in E:e\not\in p_i^n}[x_i^ed_e] + \sum_{e\in E:e\in p_i^n}\left[ \left(x_i^e-x_i^n\right) d_e+x_i^nd_e\right]  \\&
	\overset{(a)}{=}  \sum_{e\in E:e\not\in p_i^n}[x_i^ed_e] + \sum_{e\in E:e\in p_i^n}[ (x_i^e-x_i^n) d_e] +  x_i^n\mathcal{M}(f_i^n)  \\&
	\overset{(b)}{=}  \mathcal{T}(f_i^{n+1}) + x_i^n\mathcal{M}(f_i^n)  
	\overset{(c)}{\ge}  \mathcal{T}(f_i^{n+1}) + x_i^n\mathcal{M}\left(\bar{f}_i\right).
	\end{aligned}
	\end{equation}
	In \eqref{equ:proof-of-total-delay-relationship}, equality $(a)$ holds because $\sum_{e\in p_i^n}  d_e$ is the path delay of the slowest flow-carrying path $p_i^n$.
	Equality $(b)$ holds because flow $f_i^{n+1}$ is the flow when $f_i^n$ deletes $x_i^n$ rate from path $p_i^n$. Inequality $(c)$ comes from \eqref{equ:proof-of-non-decreasing-of-max-delay} and
	$f_i^{N_i+1} = \bar{f}_i$.
	
	We then do summation for \eqref{equ:proof-of-total-delay-relationship} over $n \in [1, N_i]$, and get
	\bee
	\mathcal{T}\left[\hat{f}_i\right] & = \mathcal{T}\left(f_i^1\right) \ge  \mathcal{T}\left(f_i^{N_i+1}\right) + \left(\sum_{n=1}^{N_i} x_i^n\right) \cdot \mathcal{M}(\bar{f}_i) \nnb \\
	& = \mathcal{T}\left[\bar{f}_i\right] + \epsilon\cdot \left|\hat{f}_i\right| \cdot \mathcal{M}\left(\bar{f}_i\right), \nnb
	\eee
	which proves our lemma.
\end{proof}
\subsection{Proof to Thm.~\ref{thm:sufficient-condition}}\label{adx:condition}
\begin{proof}
	\textbf{First}, we prove the polynomial time complexity. Due to condition~\ref{ass:convex-opt}, both problem~\eqref{eqn:OUAT-T} and~\eqref{eqn:OUAT-D} can be solved in polynomial time, since (i) they are convex programs with a polynomial number of variables and a polynomial number of constraints, and (ii) convex programming problems can be solved up to an arbitrarily small additive error in polynomial time (e.g., see~\cite{convex-1,convex-2} for details). For example, the time complexity is $O(|E|^3 K^3\mathcal{L})$ where $\mathcal{L}$ is the input size of the instance of the problem~\eqref{eqn:OUAT-T} or~\eqref{eqn:OUAT-D} if they are linear programs~\cite{ye1991n3l}.
After solving the average-delay-aware problem, we get $K$ single-unicast flows each of which is defined on edges. By the classic flow decomposition technique~\cite{ford1956maximal}, we can then achieve $K$ single-unicast flows $\hat{f}=\{\hat{f}_i,i=1,2,...,K\}$ each of which is defined on paths within a time of $O(|V|^2 |E| K)$. Note that the flow decomposition outputs at most $|E|$ paths for each $\hat{f}_i$, and hence there are at most $|E|$ iterations to obtain each $\bar{f}_i$ by deleting rate from $\hat{f}_i$. Overall, Algorithm~\ref{algorithm:framework} has a polynomial time complexity that is even independent to $\epsilon$ when all conditions are satisfied.  
	
	\textbf{Second}, we prove the existence of $\bar{f}$. 
	
	(i) Suppose~\eqref{eqn:OUDT-obj-1} is the objective of the problem~\eqref{eqn:MUDT}. Because problem~\eqref{eqn:MUDT} is feasible and $f^*$ is its optimal solution, $f^*$ must satisfy all the constraints of problem~\eqref{eqn:MUDT}, implying that $f^*$ also satisfies the constraints~\eqref{eqn:OUAT-T-throughput} and~\eqref{eqn:OUAT-T-feasible} of the problem~\eqref{eqn:OUAT-T} that is the average-delay-aware counterpart of the problem~\eqref{eqn:MUDT}. Now consider that we have $\mathcal{T}(g)\le \mathcal{M}(g)\cdot |g|$ for any single-unicast flow $g$, for any $i=1,2,...,K$, the following holds
	\be
	\mathcal{T}(f_i^*)~\le~ \mathcal{M}(f_i^*)\cdot|f_i^*|~\overset{(a)}{\le}~ D_i\cdot|f_i^*|,\nnb
	\ee
	where the inequality (a) comes from that $f^*$ meets the constraints~\eqref{eqn:OUDT-delay} of the problem~\eqref{eqn:MUDT}. Therefore, $f^*$ is also a feasible solution to the problem~\eqref{eqn:OUAT-T}. Due to the existence of $f^*$, problem~\eqref{eqn:OUAT-T} must be feasible and hence Algorithm~\ref{algorithm:framework} must return a solution $\bar{f}$.
	
	(ii) Suppose~\eqref{eqn:OUDT-obj-2} is the objective of the problem~\eqref{eqn:MUDT}. Because problem~\eqref{eqn:MUDT} is feasible and $f^*$ is its optimal solution, $f^*$ must meet all the constraints of problem~\eqref{eqn:MUDT}, e.g., we have $|f_i^*|\ge R_i,\forall i=1,2,...,K$. Now we construct another network flow $f$ based on $f^*$ as follows: for each $i=1,2,...,K$, we obtain $f_i$ directly from $f_i^*$, by deleting flow rate from arbitrary flow-carrying paths of $f_i^*$ till $|f_i^*|=R_i$. The existence of $f^*$ implies the existence of $f$. For problem~\eqref{eqn:OUAT-D}, it is clear that $f$ meets the throughput requirements~\eqref{eqn:OUAT-D-throughput}. Since $f^*$ meets the constraint~\eqref{eqn:OUDT-feasible}, $f$ must satisfy the constraint~\eqref{eqn:OUAT-D-feasible}. Since we delete certain flow rate from $f_i^*$ to obtain $f_i$, it is clear that the maximum delay does not increase, i.e., we have
	\be
	\mathcal{M}(f_i)~\le~\mathcal{M}(f_i^*),~~\forall i=1,2,...,K,\label{eqn:delay-f-f*}
	\ee
	further implying the following for any $i=1,2,...,K$
	\be
	\mathcal{T}(f_i)~\le~\mathcal{M}(f_i)\cdot|f_i|~=~\mathcal{M}(f_i)\cdot R_i~\le~\mathcal{M}(f_i^*)\cdot R_i~\le~D_i\cdot R_i,\nnb
	\ee
	i.e., $f$ meets the constraints~\eqref{eqn:OUAT-D-delay}. Therefore, $f$ is a feasible solution to the problem~\eqref{eqn:OUAT-D}. Due to the existence of $f$, problem~\eqref{eqn:OUAT-D} must be feasible and hence Algorithm~\ref{algorithm:framework} must return a solution $\bar{f}$. 
	
	\textbf{Third}, we prove that $\bar{f}$ satisfies the relaxed constraints~\eqref{eqn:condition-constraints}. Suppose $\hat{f}$ is the solution to the average-delay-aware problem in line~\ref{line:original-solution}. Then clearly that $\hat{f}$ meets the following constraints:
	\bse
	\bee
	& \quad \left|\hat{f}_i\right|\ge R_i,~\forall i=1,2,...,K,\label{eqn:used-utility} \\
	& \quad \mathcal{A}\left(\hat{f}_i\right)\le D_i,~\forall i=1,2,...,K,\label{eqn:used-penalty}\\
	& \quad \hat{f}=\{\hat{f}_1,\hat{f}_2,...,\hat{f}_K\}\in \mathcal{X}.
	\eee
	\ese
	
	We know $\bar{f}_i$ is the solution by deleting a rate of $\epsilon\cdot|\hat{f}_i|$ from $\hat{f}_i$ for each $i=1,2,...,K$. It is clear that $\bar{f}$ satisfies the constraints~\eqref{eqn:relaxed-throughput} and~\eqref{eqn:relaxed-feasible}. Now we look at the constraints~\eqref{eqn:relaxed-delay}.
	
	According to our Lem.~\ref{lem:framework}, for any $i=1,2,...,K$, it holds that
	\be 
	\epsilon\cdot \left|\hat{f}_i\right|\cdot \mathcal{M}\left(\bar{f}_i\right)\le \mathcal{T}\left(\hat{f}_i\right)-\mathcal{T}\left(\bar{f}_i\right)\le \mathcal{T}\left(\hat{f}_i\right),\nnb
	\ee
	implying that $\mathcal{M}(\bar{f}_i)\le\mathcal{A}(\hat{f}_i)/\epsilon,\forall i=1,2,...,K$. Based on the satisfied constraints~\eqref{eqn:used-penalty}, we have the following for any $i=1,2,...,K$
	\be
	\mathcal{M}(\bar{f}_i)~\le~\mathcal{A}(\hat{f}_i)/\epsilon~\le~ D_i/\epsilon.\nnb
	\ee
	
	\textbf{Finally}, we prove the approximation ratio of $\bar{f}$. If~\eqref{eqn:OUDT-obj-1} is the objective of problem~\eqref{eqn:MUDT}, we have
	\bee 
	&& \sum_{i=1}^{K}\mathcal{U}_i^t\left(\left|\bar{f}_i\right|\right) =\sum_{i=1}^{K}\mathcal{U}_i^t\left((1-\epsilon)\cdot\left|\hat{f}_i\right|\right)\nnb\\
	&& \overset{(a)}{\ge}(1-\epsilon)\cdot\sum_{i=1}^{K}\mathcal{U}_i^t\left(\left|\hat{f}_i\right|\right) \overset{(b)}{\ge}(1-\epsilon)\cdot\sum_{i=1}^{K}\mathcal{U}_i^t\left(\left|f_i^*\right|\right)\nnb 
	\eee
	where the inequality (b) holds because in the second part of this proof, we have proved that $f^*$ is a feasible solution to the average-delay-aware problem~\eqref{eqn:OUAT-T}, while $\hat{f}$ is its optimal solution. Inequality (a) comes from the following inequalities for each $i=1,2,...,K$
	\bee
	&& \mathcal{U}_i^t\left((1-\epsilon)\cdot\left|\hat{f}_i\right|\right) = \mathcal{U}_i^t\left(\epsilon\cdot0+(1-\epsilon)\cdot\left|\hat{f}_i\right|\right)\nnb\\
	&& \overset{(c)}{\ge}\epsilon\cdot\mathcal{U}_i^t(0)+(1-\epsilon)\cdot\mathcal{U}_i^t\left(\left|\hat{f}_i\right|\right) \overset{(d)}{\ge} (1-\epsilon)\cdot\mathcal{U}_i^t\left(\left|\hat{f}_i\right|\right)\nnb,
	\eee
	where the inequality (c) holds due to the concavity of the function $\mathcal{U}_i^t(\cdot)$, and the inequality (d) comes from that the function $\mathcal{U}_i^t(\cdot)$ is non-negative, considering that the condition~\ref{ass:convex-opt} is satisfied.
	
	If~\eqref{eqn:OUDT-obj-2} is the objective of problem~\eqref{eqn:MUDT}, we assume $f$ is the feasible solution to the average-delay-aware problem~\eqref{eqn:OUAT-D} that is constructed from $f^*$ as discussed in the second part of this proof. Then we have 
	\bee 
	&& \sum_{i=1}^{K}\mathcal{U}_i^d\left(\mathcal{M}(\bar{f}_i)\right) \le\sum_{i=1}^{K}\mathcal{U}_i^d\left(\mathcal{A}\left(\hat{f}_i\right)/\epsilon\right)\nnb\\ && \overset{(a)}{\le}\frac{1}{\epsilon}\cdot\sum_{i=1}^{K}\mathcal{U}_i^d\left(\mathcal{A}\left(\hat{f}_i\right)\right) \overset{(b)}{\le}\frac{1}{\epsilon}\cdot\sum_{i=1}^{K}\mathcal{U}_i^d\left(\mathcal{A}\left(f_i\right)\right)\nnb\\ && \le\frac{1}{\epsilon}\cdot\sum_{i=1}^{K}\mathcal{U}_i^d\left(\mathcal{M}\left(f_i\right)\right) \overset{(c)}{\le} \frac{1}{\epsilon}\cdot\sum_{i=1}^{K}\mathcal{U}_i^d(\mathcal{M}(f_i^*)),\nnb  
	\eee
	where the inequality (a) comes from the satisfied condition~\ref{ass:penalty-convex}, the inequality (b) holds since $f$ is feasible to problem~\eqref{eqn:OUAT-D} while $\hat{f}$ is optimal to problem~\eqref{eqn:OUAT-D}, and the inequality (c) is true because of the inequality~\eqref{eqn:delay-f-f*} and the non-decreasing property of $\mathcal{U}_i^d(\cdot)$.
\end{proof}
\subsection{Proof to Thm.~\ref{rmk:NP-hard}}\label{adx:rmk-hard}
\begin{proof}
\textbf{First}, we consider the following problem that is a special case of the \textsf{MUDT} with relaxed maximum delay constraints,
\bee
\max && \quad -\mathcal{M}(f_1)\nnb
\\ \mbox{s.t. }
&& \quad |f_1|\ge R_1,\nnb\\
&& \quad \mathcal{M}(f_1)\le +\infty,\nnb\\
&& \quad f=\{f_1\}\in \mathcal{X}.\nnb
\eee
It has been proved to be NP-complete to find the optimal solution to above problem (see Appendix of~\cite{misra2009polynomial}).

\textbf{Second}, we consider the following problem that is a special case of the \textsf{MUDT} with relaxed throughput requirements,
\bee
\max && \quad |f_1|\nnb
\\ \mbox{s.t. }
&& \quad |f_1|\ge 0,\nnb\\
&& \quad \mathcal{M}(f_1)\le D_1,\nnb\\
&& \quad f=\{f_1\}\in \mathcal{X}.\nnb
\eee
Follow a similar proof as that in the Appendix of~\cite{misra2009polynomial}, it can be proved to be NP-complete to find the optimal solution to the aforementioned problem.

\textbf{Third}, also following a similar proof as that in the Appendix of~\cite{misra2009polynomial}, it can be proved that it is NP-complete even to construct a feasible solution to the following problem that is a special case of our \textsf{MUDT}, strictly meeting all constraints
\bse
\bee
\max && \quad \mathcal{U}_1^t(|f_1|)\nnb
\\ \mbox{s.t. }
&& \quad |f_1|\ge R_1,\nnb\\
&& \quad \mathcal{M}(f_1)\le D_1,\nnb\\
&& \quad f=\{f_1\}\in \mathcal{X},\nnb
\eee
\ese
where $\mathcal{U}_1^t(|f_1|)=1$ which is a constant.
\end{proof}
\subsection{Proof to Thm.~\ref{thm:PASS-M}}\label{adx:PASS-M}
\begin{proof}
	\textbf{First}, due to the same proof to Thm.~\ref{thm:sufficient-condition}, Algorithm~\ref{alg:PASS-M} has a polynomial time complexity, and must give a solution $\bar{f}$.
	
	\textbf{Second}, it is straightforward that constraints~\eqref{eqn:PASS-M-delay} and~\eqref{eqn:PASS-M-feasible} are met. Now let us denote $(|\hat{f}_i|-|\bar{f}_i|)/|\hat{f}_i|$ as $\epsilon_i$. Thus $\epsilon_{\min}\le\epsilon_i\le\epsilon_{\max}$ for any $i=1,2,...,K$, implying the following
	\be
	\left|\bar{f}_i\right|~=~(1-\epsilon_i)\cdot\left|\hat{f}_i\right|~\ge~(1-\epsilon_{\max})\cdot\left|\hat{f}_i\right|,~~\forall i=1,2,...,K,\nnb
	\ee
	i.e., the constraints~\eqref{eqn:PASS-M-delay} are satisfied.
	
	\textbf{Third}, following the same proof as to Thm.~\ref{thm:sufficient-condition}, the approximation ratio~\eqref{eqn:PASS-M-T-ratio} can be proved.
	
	As for the approximation ratio~\eqref{eqn:PASS-M-D-ratio}, let as assume $\tilde{f}$ to be the solution where for each $i=1,2,...,K$, we delete $\epsilon_{\min}|\hat{f}_i|$ rate from the slowest flow-carrying paths of $\hat{f}_i$ to obtain $\tilde{f}_i$. It is clear that
	\be
	\mathcal{M}(\bar{f}_i)~\le~\mathcal{M}(\tilde{f}_i),~~\forall i=1,2,...,K,\nnb
	\ee
	because both $\bar{f}_i$ and $\tilde{f}_i$ are flows after we delete rates from the slowest flow-carrying paths of $\hat{f}_i$, but the amount of deleted rate to obtain  $\bar{f}_i$ is no smaller than the amount of deleted rate to obtain $\tilde{f}_i$, for each $i=1,2,...,K$. Therefore, we have the following
	\be 
	\sum_{i=1}^{K}\mathcal{U}_i^d\left(\mathcal{M}\left(\bar{f}_i\right)\right)\le  \sum_{i=1}^{K}\mathcal{U}_i^d\left(\mathcal{M}\left(\tilde{f}_i\right)\right) \overset{(a)}{\le}\frac{1}{\epsilon_{\min}}\cdot\sum_{i=1}^{K}\mathcal{U}_i^d\left(\mathcal{M}\left(f_i^*\right)\right),\nnb
	\ee
	where the inequality (a) comes from our Thm.~\ref{thm:sufficient-condition}, since $\tilde{f}$ is also the solution returned if we use Algorithm~\ref{algorithm:framework} with $\epsilon=\epsilon_{\min}$ to solve the problem~\eqref{eqn:MUDT}.  
\end{proof}
\subsection{Proof to Thm.~\ref{thm:PASS-T}}\label{adx:PASS-T}
\begin{proof}
	Same to the proof as that of Thm.~\ref{thm:sufficient-condition}, it holds that \textsf{PASS-T} must return a solution $\bar{f}$ in a polynomial time, meeting the constraints~\eqref{eqn:PASS-T-throughput},~\eqref{eqn:PASS-T-feasible}, and providing the approximation ratio~\eqref{eqn:PASS-T-T-ratio}.
	
	Because that $\bar{g}$ is the solution of \textsf{PASS}, we have
	\bee
	&& \mathcal{M}\left(\bar{g}_i\right)\le D_i/\epsilon,~~\forall i=1,2,...,K,\label{eqn:use-1}\\
	&& \sum_{i=1}^{K}\mathcal{U}_i^d\left(\mathcal{M}\left(\bar{g}_i\right)\right)\le \frac{1}{\epsilon}\cdot\sum_{i=1}^{K}\mathcal{U}_i^d\left(\mathcal{M}\left(f_i^*\right)\right).\label{eqn:use-2}
	\eee
	
	According to the definition of $\lambda$, we have
	\be
	\mathcal{M}(\bar{f}_i)\le\lambda\cdot\mathcal{M}(\bar{g}_i),~~\forall i=1,2,...,K,\nnb
	\ee
	implying the following considering the inequality~\eqref{eqn:use-1}
	\be
	\mathcal{M}(\bar{f}_i)\le \lambda\cdot D_i/\epsilon,~~\forall i=1,2,...,K,\nnb\\
	\ee
	i.e., $\bar{f}$ satisfies the constraints~\eqref{eqn:PASS-T-delay}. We further have
	\bee
	&& \sum_{i=1}^{K}\mathcal{U}_i^d\left(\mathcal{M}(\bar{f}_i)\right) \le \sum_{i=1}^{K}\mathcal{U}_i^d\left(\lambda\cdot\mathcal{M}(\bar{g}_i)\right)\nnb\\
	&& \overset{(a)}{\le} \lambda\cdot\sum_{i=1}^{K}\mathcal{U}_i^d\left(\mathcal{M}(\bar{g}_i)\right) \overset{(b)}{\le} \frac{\lambda}{\epsilon}\cdot\sum_{i=1}^{K}\mathcal{U}_i^d\left(\mathcal{M}(f_i^*)\right),\nnb
	\eee
	where the inequality (a) comes from the satisfied condition~\ref{ass:penalty-convex}, and the inequality (b) holds due to the inequality~\eqref{eqn:use-2}. Thus the approximation ratio~\eqref{eqn:PASS-T-D-ratio} holds.
\end{proof}

\end{document}